\def\BibTeX{{\rm B\kern-.05em{\sc i\kern-.025em b}\kern-.08em
    T\kern-.1667em\lower.7ex\hbox{E}\kern-.125emX}}
\DeclarePairedDelimiter\floor{\lfloor}{\rfloor}
\newcommand\tx{\tilde{X}(t)} 
\newcommand\hx{\hat{X}(t)}
\newcommand\ro{\rho(t)}
\newcommand\cs{c_{\mathrm{s}}}
\newcommand\ct{c_{\mathrm{t}}}
\newcommand\brq{\bar{q}} 
\newcommand\brp{\bar{p}} 
\newcommand\brmu{\bar{\mu}}
\DeclareMathOperator*{\argmin}{arg\,min} 
\def\blue{\textcolor{blue}}
\newtheorem{remark}{Remark}
\newtheorem{Pro}{Proposition}
\begin{document}
\title{ \huge 
 Semantic-aware Sampling and Transmission in Real-time Tracking Systems:  A POMDP Approach
}
\author{\IEEEauthorblockN{Abolfazl Zakeri, 
Mohammad Moltafet, 
and 
  \IEEEauthorblockN{Marian Codreanu}
  }
\thanks{
A. Zakeri is with Centre for Wireless Communications--Radio Technologies,
  University of Oulu, Finland,
 e-mail: abolfazl.zakeri@oulu.fi. 
 M. Moltafet is with Department of Electrical and Computer Engineering University of
California Santa Cruz,
email:
mmoltafe@ucsc.edu. 
 M. Codreanu is with Department of Science and Technology,
  Link\"{o}ping University, Sweden, 
  e-mail: marian.codreanu@liu.se.
  \\\indent
 This work was funded by the Research Council of Finland (former Academy of Finland) 6G Flagship Programme (Grant Number: 346208). The work of M. Codreanu has also been financially supported in part by the Swedish Research Council (Grant Number: 2022-03664).
\\ \indent
The paper's preliminary results were presented at Asilomar 2023 \cite{zakeri2023_Asilomar} and IEEE Globecom 2023 \cite{zakeri2023_AoII}.
}
}
\maketitle
\begin{abstract}
We address the problem of real-time remote tracking of a \textit{partially} observable Markov source in an energy harvesting system with an unreliable communication channel. We consider both sampling and transmission costs. Different from most prior studies that assume the source is fully observable, the sampling cost renders the source partially observable. The goal is to jointly optimize sampling and transmission policies for two semantic-aware metrics:  i) a general distortion measure and ii) the age of incorrect information (AoII). We formulate a stochastic control problem. To solve the problem for each metric, we cast a partially observable Markov decision process (POMDP), which is transformed into a belief MDP. Then, for both AoII under the perfect channel setup and distortion, we express the belief as a function of the age of information (AoI). This expression enables us to effectively truncate the corresponding belief space and formulate a finite-state MDP problem, which is solved using the relative value iteration algorithm. For the AoII metric in the general setup, a deep reinforcement learning policy is proposed. 
Simulation results show the effectiveness of the derived policies and, in particular, reveal a \textit{non-monotonic}  switching-type structure of the real-time optimal policy with respect to AoI.
 \\
\indent \textit{Index Terms--} Real-time tracking, semantic-aware communication, sampling and scheduling,  partially observable Markov decision process.
	\end{abstract}
\section{Introduction}
Real-time knowledge of a remotely monitored process at the intended destination,  referred to as real-time tracking,
is needed to support the emerging time-critical applications in the future Internet of Things networks, e.g., industrial control, smart home,  intelligent transportation, and drone control.
These real-time tracking systems  can be efficiently 
designed within a ``semantic-aware and goal-oriented communication" framework \cite{ELIF_Semantic_mag, Deniz_Semantic_JSAC}  which aims to provide 
the right (i.e., significant, valuable) piece of information to 
the right point of computation or actuation at the right
point in time \cite{ELIF_Semantic_mag}.
A quantifiable and analyzable surrogate for semantics in certain applications (e.g., situational awareness, location tracking), where it comes to ``timing as semantics” in communications \cite{Deniz_Semantic_JSAC}, could be {the age of information (AoI)} \cite{ELIF_Semantic_mag}. AoI is defined as the time elapsed since the
last successfully received status update packet was generated \cite{Roy_2012, AoI_Monograph_Modiano}.
However, AoI does not explicitly consider the content of the updates or the similarity/discrepancy between the status of information at the transmitter and the monitor \cite{Deniz_Semantic_JSAC}.
This motivated the use of some distortion measures and the age of incorrect information (AoII) \cite{Tony_1} for semantic-aware communications in the context of real-time remote tracking, e.g., \cite{AoII_Semantic}.
AoII combines freshness with a distortion penalty to quantify the discrepancy between the information source and its estimate at the monitor.

Most of the existing research on real-time tracking that used distortion-based metrics or the AoII metric,
e.g., \cite{Nikolaos_goal_or, NiKo_Goal2, AoII_Semantic, Andrea_RemoteMonitoring, Atilla_RtM, 
Tony_1, Kam_Towards_eff_2018, Petar_AoII_ICC, Kam_2018, Chen_AoII_2023},
assumed that the underlying information source is \textit{fully} observable. 
This
requires
continuous sampling and processing of the source signal.
However, in practice, this could be challenging
due to high
sampling costs, or even impossible due to insufficient energy, e.g., in energy harvesting systems.
These facts motivate us to study 
the real-time tracking problem of a \textit{partially} observable source, considering energy 
harvested from the environment as the sole energy supplier of both sampling and transmission operations.

We consider a real-time tracking system consisting of a source, a sampler, a buffer, a transmitter,  and a monitor, as depicted in Fig.~\ref{Fig_EnrHar}.
The monitor is interested in  
 real-time tracking of the source.
 The energy supplier of the system is the harvested energy stored in a finite-capacity battery. 
Upon the controller's command, the sampler takes a sample from the source and the transmitter transmits the sample available in the buffer through an error-prone channel.  
Operation of each sampling and transmission requires some units of energy, referred to as the sampling and transmission costs.
Thus, the source is not observable unless a sample is taken.
Given the setup, 
we aim to address the following question:
\textit{when it is optimal to sample and when it is optimal to transmit according to different semantic-aware metrics?} 

We consider two different semantic-aware metrics: i) a
general distortion measure and ii) AoII, and jointly optimize sampling
and transmission policies for each of them. 
Importantly, the general distortion metric here can be particularized to the absolute error and squared error, among others.  
For each considered metric, we formulate and solve a stochastic control problem aiming to minimize the (infinite horizon) time average expected value of the metric subject to an energy causality constraint. 
For the distortion metric, since it is a function of the source state, which is not fully observable, 
we model the problem as   
a partially observable Markov decision process (POMDP) that is subsequently turned into a belief MDP problem. 
We then show that the belief can be expressed as a function of AoI, and, the belief space can be effectively truncated by bounding AoI. This, in turn, enables us to transform the original belief MDP problem into a \textit{finite-state} MDP problem, which is solved using the relative value iteration (RVI) algorithm. 

\blue{For the AoII metric, we formulate a POMDP problem, which is then transformed into an equivalent belief MDP. The belief MDP problem has an infinite state space, presenting a significant challenge for solving it. However, in the case of a perfect channel, we demonstrate that the associated belief space can be effectively truncated. This allows us to formulate a finite-state MDP problem, which can then be solved using RVI. 
Additionally, for the general case, we propose a deep reinforcement learning policy to solve the belief MDP problem.} 

Finally,  numerical analyses are provided
to show the effectiveness
of the proposed policies and examine the impact of system parameters on the performance metrics.
The results show various switching-type structures of the derived policies with respect to the battery level and AoI. 
Particularly, when distortion is the real-time error, the derived policy has $\text{non-monotonic}$ switching-type structure with respect to AoI.
Furthermore, we numerically observe that a distortion-based policy that minimizes the average real-time error also minimizes the average AoII, thereby extending the results of \cite{Kam_2018} for a general model.
\subsection{Contributions}
The main contributions of the paper are summarized as follows:
\begin{itemize}
    \item  We address the real-time tracking problem of a \textit{partially observable} source in an energy harvesting system under sampling and transmission costs.
    \item We provide joint sampling and transmission policies according to two different semantic-aware performance metrics: distortion and AoII. 
    \item 
    We exploit the POMDP approach and cast a  belief MDP problem for each metric
to account for the partial observability of the source. We then manage to effectively truncate the
belief space of the belief MDP problems and solve them using RVI.
    \item We provide extensive numerical results to show the effectiveness of the proposed policies and their structure.
\end{itemize}
\subsection{Organization} 
The rest of the paper is organized as follows. 
Related work is discussed in Section~\ref{Sec_RW}.
The system model and problem formulation are presented in Section~\ref{Sec_SM_PF}.  Solutions are provided in Section~\ref{Sec_solutions}. 
The numerical results are shown in Section~\ref{Sec_NumRes}. Finally, concluding remarks are made in Section~\ref{Sec_CR_FW}.
\section{Related Work} \label{Sec_RW}
Apart from AoI\footnote{We shall acknowledge that there is a body of work in the AoI context that considered transmission and/or sampling costs, e.g., \cite{Elif_Ergen_Learning_2021, Elif_when2pull_conf, Zakeri_CL, Elif_2019_EH}. 
Different from these works, we consider the sampling and transmission costs with a general distortion and AoII metrics that are a function of the actual \textit{value} of the (information) source observed through the sampling process.
In contrast, the actual value of the source is entirely irrelevant to AoI (and its optimization).}, recently,  there has been growing interest in real-time  tracking of an information source, 
 taking into account the source dynamics and/or semantics of information \cite{Kam_2018, Yin_Sun_EstAcm, Yin_Sun_IT,  Assaad_AoII_Not_obs, Nikolaos_goal_or, Kam_Towards_eff_2018, Niko_DistortionSampCost, Andrea_RemoteMonitoring,AoII_Semantic, Niko_Remote_Recons, Assaad_unknownSource,Assaad_distance_basedAoII, Atilla_RtM, Peter_GoalOriented, Saha_AoII, Chen_AoII_2023}. 
 In this regard, the  two most commonly considered  performance metrics are: 1)
 distortion-based metrics, e.g., \cite{Kam_2018,Nikolaos_goal_or,Marios_ratedist ,
 Atilla_RtM, Scheduling_estimation_optimal_2013, Andrea_RemoteMonitoring, Yin_Sun_EstAcm}, and 2) AoII, e.g., \cite{AoII_Semantic, Assaad_AoII_Not_obs, Chen_AoII_2023}. 
 
 In \cite{Yin_Sun_IT}, the authors 
studied the remote estimation problem of  a Wiener process under a random delay
channel.  They showed that the mean squared error (MSE)-optimal policy is a signal-dependent
threshold policy, and minimizing MSE is not equivalent to
minimizing AoI.
The work \cite{YinSun_WhittleIndx_mutisource} studied the remote estimation problem of multiple Markov processes, considering MSE as a performance metric and developing a Whittle index policy. 
The authors of \cite{Kam_2018} provided the $\text{AoI-,}$ the real-time error-, and the  AoII-optimal policies for a remote estimation problem of a symmetric binary Markov source under a random delay channel. Their results show that the \textit{sample-at-change} policy, which simultaneously samples and transmits whenever there is a difference between the source state and its estimate at the monitor side, optimizes both the real-time error and AoII.
Reference \cite{Petar_goalOr_schedulingtc} studied a goal-oriented scheduling problem in a system where multiple sensors observe a process and send their data to the edge node upon commands.
To this end, the authors introduced a query-based value of information metric, defined by MSE at the queries. 
They proposed a deep reinforcement learning-based solution and demonstrated how different query processes affect the overall system performance.
The authors of \cite{NiKo_Goal2} studied a real-time tracking problem of a  Markov source under an average resource constraint. They
introduced a distortion metric so-called actuation cost error and
developed an optimal policy using the constrained MDP approach, and a sub-optimal policy using the drift-plus-penalty method.
The real-time tracking of a Markov source in an energy harvesting system under a transmission cost was studied in \cite{Atilla_RtM}. Specifically,  the authors considered a distortion metric which is a function of the estimation error. They formulated an  MDP problem and proved that an optimal policy has a threshold structure. 
Different sampling and transmission policies with semantic-aware performance metrics, e.g., real-time error,  for real-time tracking of a binary  Markov source was studied
in \cite{Nikolaos_goal_or}. Later,  the authors of \cite{Niko_Remote_Recons} generalized \cite{Nikolaos_goal_or} to two different multi-state Markov sources, considering stationary randomized policies and analyzing their performance. 

The authors of \cite{Assaad_AoII_Not_obs} studied the real-time tracking problem of multiple Markov sources under a transmission  constraint  where the decision-maker is located at the monitor side.  
They developed a heuristic scheduling policy that minimizes the mean AoII, using the POMDP approach and the idea of the Whittle index policy. Then, they
optimized AoII under  an unknown  Markov source, i.e., the transition probabilities of the source are not known a priori, in \cite{Assaad_unknownSource}.
The work \cite{AoII_ARQ} provided an AoII-optimal transmission  policy
in a system with a multi-state symmetric Markov source and a monitor subject to a hybrid automatic repeat request
protocol and a resource constraint; they proposed a constrained MDP approach to find an optimal policy. 
In contrast to AoII studies in centralized settings (e.g., \cite{Assaad_AoII_Not_obs}),
the authors of \cite{Petar_AoII_ICC} optimized AoII 
in a decentralized
setting where multiple sensors, each monitoring a Markov source,  send their state to a monitor through a shared slotted ALOHA random access channel. Particularly, they provided a heuristic policy
for which a non-convex optimization problem was formulated and approximately solved using a gradient-based
algorithm. 
Reference  \cite{Chen_AoII_2023} minimized AoII in a discrete-time system with a Markov source and a monitor under an error-free but random delay channel. They optimized transmission policies using the MDP approach, showing that an optimal policy has a threshold structure.   

To conclude, different from this paper, most of the discussed works considered that the source state is fully observable
for decision-making, e.g., \cite{
Nikolaos_goal_or, NiKo_Goal2, AoII_Semantic, Andrea_RemoteMonitoring, Atilla_RtM, 
Tony_1, Kam_Towards_eff_2018, Petar_AoII_ICC, Kam_2018, Chen_AoII_2023,
  Niko_Remote_Recons, Saha_AoII,  AoII_ARQ, Petar_goalOr_schedulingtc, YinSun_WhittleIndx_mutisource}.
Besides, only a few works, such as \cite{Atilla_RtM}, considered an energy harvesting system.
Optimizing either distortion-based metrics or AoII becomes more challenging in cases where the source is not fully observable. This difficulty is compounded when the availability of energy for sampling and transmission operations itself is a random process.

The most closely related works to this paper are \cite{Kam_2018, Assaad_AoII_Not_obs}. 
However, our study  differs significantly from them in two fundamental aspects: 1) we consider the sampling cost which renders the source partially observable, and
2) we consider that the operation of both sampling and transmission at each slot is subject to the availability  of energy harvested from the environment. 
In \cite{Assaad_AoII_Not_obs} the source is also not fully observable.  
The partial observability in \cite{Assaad_AoII_Not_obs}, however,  is due to the controller's location, whereas in this paper it is due to the sampling cost;
this means that we have additional control on the sampling itself which makes our problem more challenging to be solved.
Additionally, different from \cite{Assaad_AoII_Not_obs}, we also address the real-time tracking problem of a partially observable source with a general distortion metric. 
 
\section{ System Model and Problem Formulation }\label{Sec_SM_PF}
\subsection{System Model}\label{Sec_SM}
We consider a  real-time tracking system consisting of an information source, a sampler, a buffer,  a transmitter, and a monitor, as shown in Fig. \ref{Fig_EnrHar}. 
The system
is powered by an energy-harvesting module equipped with a finite-capacity battery.
Time is discrete with unit time slots, i.e.,  ${ t\in\{0,1,\ldots\} }$.
Upon the controller's command, the sampler takes a sample from the source and stores it into the buffer and the transmitter transmits the sample available in the buffer through an error-prone channel. 
The last taken sample,  denoted by $\tx$, is always kept in the transmitter's buffer.
Importantly, at each slot, the controller, located at the transmitter side,  \textit{does not observe} the source; the controller
 observes the battery level, the information in the buffer, and the transmission results (i.e., ACK/NACK feedback from the monitor).
\begin{figure}[t!]
    \hspace{-1 em}
\includegraphics[width=.52\textwidth]
{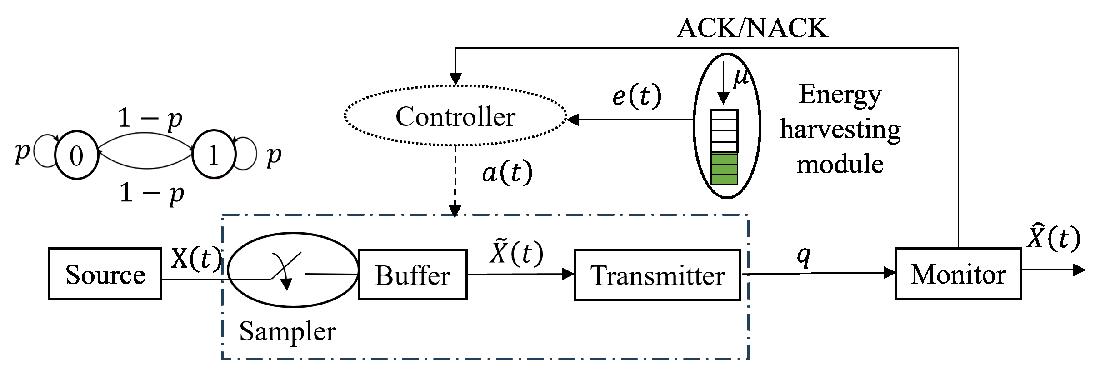}   
    \caption{ System model. 
    }
    \label{Fig_EnrHar}
\end{figure}

\textit{Source Model:}
The source is modeled via a  two-state (binary) symmetric discrete-time Markov  chain ${ X(t)\in\{0,1\} }$   with the self-transition  probability 
$p$.
For the sake of presentation clarity and without loss of generality, we assume $p>0.5$. Note that the results are identical for $p<0.5$ provided that the monitor employs an appropriate state estimation strategy (see Remark~\ref{Rem_estimation} below).
The binary Markov source is a commonly used model (e.g., \cite{Kam_2018, Kam_Towards_eff_2018, Tony_1, Chen_AoII_2023}) and it provides fundamental insights into the sampling and transmission optimization in the system. Nonetheless, an extension to a finite-state Markov source is outlined in Remark~\ref{Rem_estimation}.
We have also provided  results for the multi-state source 
(see Fig. \ref{Fig_numstate} and Fig. \ref{Fig_mse_asym_mu} in Section \ref{Sec_NumRes}).

\textit{Estimation Strategy:} The monitor provides/generates a real-time estimate of the source. 
We assume that the monitor employs a maximum likelihood estimation, which for the source with  $p>0.5$ is the last received sample \cite{Kam_Towards_eff_2018}. 
We denote the source estimate at slot $t$ by $\hx$. 
\begin{remark}\label{Rem_estimation}
\blue{To extend the analysis to a (general) finite-state Markov source, one needs to make the necessary modifications to implement the corresponding maximum likelihood estimation. The maximum likelihood estimation is generally a function of the $N-$step transition probability of the source's chain and the last received sample by the monitor, where $N$ is the age of that sample at the monitor. 
In particular, for the finite-state symmetric Markov source,\footnote{That is a Markov chain in which each state has the same self-transition probability $p$ and equal out-transition probability $r$ to every other state.}
it can be shown that if 
the self-transition probability $p$ is larger than the (per state) out-transition probability $r$, 
the maximum likelihood estimation of the source is the last received sample. On the other hand, if $p<r$, then the maximum likelihood estimation is the last received sample if its age at the monitor is even, and any other state from the source state space (i.e., different than the last received sample) if the age at the monitor is odd \cite[Theorem 1]{Zakeri_wcnc24}.
}
\end{remark}
\textit{Sampling and Transmission Costs:} We assume that
each sampling consumes $\cs$ units of energy (i.e., sampling cost), and
each transmission consumes $\ct$ units of energy (i.e., transmission  cost).

\textit{Actions:}
Given that the system operates on harvested energy, the system needs to stay idle in some slots due to the lack of enough energy. 
Also, having the buffer allows for retransmissions, which is beneficial especially in the case of slow-dynamic source (i.e., $p$ is large) and/or when the channel reliability (i.e., $q$) is low, because it saves the cost of taking a new sample; thus, retransmitting an old sample is more cost-efficient than taking a new sample and transmitting it.
Furthermore, according to the estimation strategy, the transmission of a newly acquired sample when its value is identical to the estimate at the monitor side (i.e., $X(t)=\hx$) would only waste  energy without improving the estimate. To formalize,   
let ${ a(t)\in\{0,1,2\} } $ denote the command action at slot $t$: $a(t)=0$ means the sampler and the transmitter stay idle, 
$ a(t) =  1 $  means   the transmitter re-transmits the sample in the buffer,
and 
$a(t)=2$ means the sampler takes a new sample (whose value is $X(t)$)\footnote{The sampling of the source takes place at the beginning of the slot, right after the state transition (we assume that the source and the system clocks are synchronized).}
and the transmitter transmits that sample only if ${X(t)\neq \hx}$.

\textit{Wireless Channel:}
We assume an error-prone channel between the transmitter and the monitor. Each transmission takes one slot
 and it is successfully received with probability ${ q} $, referred to as the reception success probability.
The  unsuccessfully received   samples  can be retransmitted, and they experience the same reception  success probability.
We assume that  perfect  (i.e., instantaneous  and error-free) feedback is available for each transmission.\footnote{\blue{Due to wireless channel conditions, the feedback might be imperfect, 
adding more uncertainty to the knowledge of the estimate $\hat{X}(t)$ in the decision making. This aspect, however, is beyond the scope of this paper. We addressed a real-time tracking problem with an imperfect feedback channel in \cite{vilni2024real}.}}  

\textit{Energy Harvesting Model:}
The energy supplier of the system harvests energy and stores it in a finite-capacity battery of $E$ units of energy.
Similarly to, e.g.,~\cite{EH_bern_conf, EH_Berno_Erikl}, we assume that  the energy arrivals $u(t)\in\{0,1\}$ follows 
a Bernoulli process with parameter $\mu$, i.e., ${\Pr\{u(t)=1\} = \mu}$. 
Per occurrence of sampling and transmission, their cost will be deducted from the battery, thus, the evolution of 
the battery level at slot $t$, denoted by $e(t)\in\{0,\dots,E\}$, can be written as  
\begin{align}
\nonumber
     e(t+1) = 
     &
\min \Big\{e(t) + u(t)
-  \mathds{1}_{\{a(t)=1\}} \ct 
\\ &
- \mathds{1}_{\{a(t)=2\}} \big(\cs + \mathds{1}_{\{X(t)\neq \hx\}}  \ct \big), \ E \Big\},
\end{align}
where 
$\mathds{1}_{\{.\}}$ is an indicator function that equals $1$ when the condition(s) in $\{.\}$ is true.
\indent
At each slot $t$, the action $a(t)=1$ can be taken if the battery has at least $\ct$ units of energy, and the action $a(t)=2$ can be taken if the battery has at least $\ct+\cs$ units of energy. Therefore, we have the following energy causality  constraint: 
\begin{equation}\label{Eq_EnCons}
    \begin{array}{cc}
         e(t) - \mathds{1}_{\{a(t)=1\}}\ct- \mathds{1}_{\{a(t) = 2\}}(\cs + \ct)
                 \ge 0, ~\forall\,t.
    \end{array}
\end{equation}
\indent
Our goal is to optimize the action decision of $a(t)$
according to 
 different semantic-aware performance metrics described in the next subsection. 
\subsection{Performance Metrics and Problem Formulation}
We consider two performance metrics: a general distortion and  AoII defined below:
\\ 1)  \textit{A General Distortion Measure:}
We define a general distortion measure    by 
     $f\left( X(t), \hat{X}(t) \right)$,
where the function 
${ f: \{0,1\}^2 \rightarrow \mathbb{R} }$   %
could be any bounded function, i.e., ${ |f(\cdot)|<\infty }$. 
For instance, the distortion could be defined as
\begin{equation}\label{Eq_dis}
     \begin{array}{cc}
f( X(t), \hat{X}(t) ) =
\left\{\begin{array}{ll}
 0, & \text{if}~~   {X}(t) = \hx,
   \\
c_1 , & \text{if}~~   {X}(t) =0,~ \hx=1,
   \\
c_2 , & \text{if}~~   {X}(t) =1,~ \hx=0,
 \end{array}\right. 
    \end{array}
 \end{equation} 
where $c_1$ and $c_2$ are positive finite values.
Notice that the function $f(\cdot)$ can be particularized to either real-time error, ${ f( X(t), \hat{X}(t) )=\mathds{1}_{\{X(t) \neq \hat{X}(t)\}} }$ \cite{Kam_2018}, 
or the actuation cost error \cite{Nikolaos_goal_or} 
with  form given in \eqref{Eq_dis}.
\\
2)  \textit{The Age of Incorrect Information:}
We adopt the AoII definition from \cite{Assaad_AoII_Not_obs, Assaad_unknownSource}, where AoII represents the time elapsed since the last time when the source state was the same as the current estimate at the monitor, $\hat{X}(t)$.
Formally, let ${ V(t) \triangleq \max \{t'\le t: X(t') = \hat{X}(t)\} }$.
The AoII at slot $t$,  denoted by $\delta(t)$, is defined by  
\begin{align}\label{Eq_AoII_def}
     \delta(t) = \left(t - V(t) \right).
\end{align}
\indent
At each slot, we aim to find the best command action $a(t)$  that optimizes an average performance metric subject to energy causality constraint \eqref{Eq_EnCons}. Formally, our goal is to solve the following stochastic control problem:
        \begin{subequations}
       \label{Org_P1}
       \begin{align}
          {\mbox{minimize}}~~~   &
           \limsup_{T\rightarrow \infty}\,\frac{1}{T}   \sum_{t=1}^T \Bbb{E}\{ h(t) \}
        		\\
        		\mbox{subject to}~~~ & 
                   \label{Cons_smp_trans_S2}
 e(t) - \mathds{1}_{\{a(t)=1\}}\ct- \mathbbm{1}_{\{a(t) = 2\}}(\cs + \ct)
                 \ge 0, ~\forall\,t,
                   \end{align}
        		\end{subequations}
          with variables ${ \{a(t)\}_{t=1,2,\ldots} }$.
          The function $h(t) $ can be any of the previously defined metrics, i.e.,  ${  h(t)\in \{f(.),\, \delta(t)\} }$. 
The expectation $\mathbb{E}\{\cdot\}$ is taken with respect to the system randomness (due to source, energy arrivals,  and  wireless channel) and the (possibly randomized) action selection of $a(t)$.
 We assume that the command action $a(t)$ 
is computed causally based on the available observations at the controller. 

\section{\blue{POMDP Modeling and Optimal Solutions} } \label{Sec_solutions}
In this section, we present solutions to the main problem \eqref{Org_P1} for each of the considered metrics:\footnote{A solution to the problem with the AoI metric can be found in\cite{Elif_Ergen_Learning_2021}.}
 the distortion metric in Section \ref{SubSec_MSE} and the AoII metric in Section \ref{Subsec_AoII}.
\subsection{The Distortion Metric}\label{SubSec_MSE}
In this section,  
 we solve problem \eqref{Org_P1} with  ${ h(t)= f(X(t),\hx) }$. 
Due to the sampling cost, the source $X(t)$ is only partially observable, and thus, we first model 
 problem \eqref{Org_P1} as
a POMDP. Then, we cast the POMDP into a finite-state MDP problem and solve it using RVI.

The POMDP is described by the following  elements: 
\\
$\bullet$ \textit{State:}
The state at slot $t$ is ${s(t) = \left( e(t), X(t),  \tilde{X}(t), \hat{X}(t) \right)}$, where ${ e(t)\in\{0,1,\dots, E\} }$ is the battery level, 
$  { X(t)\in\{0,1\} } $ is the source state,  $ { \tilde{X}(t)\in\{0,1\} }$ is the sample in the buffer, and $ {\hat{X}(t)\in\{0,1\}  }$ is the estimate of the source at the monitor.
The state space is denoted by $\mathcal{S}$ which is a finite set.  
\\
$\bullet$ \textit{Observation:} The observation at slot $t$ is 
${ o(t) = \left( e(t), \tilde{X}(t), \hat{X}(t) \right) }$. 
\\
$\bullet$ \textit{Action:}  
Action at slot $t$ is $a(t)$ as defined in Section~\ref{Sec_SM}.
$\bullet$
\textit{State Transition Probabilities:}
The transition probabilities from current state ${s= \left(e, X, \tilde{X}, \hat{X} \right)}$ to  next state ${s' = \left(e', X', \tilde{X}', \hat{X}' \right) }$
under a given action $a$ is defined by  
$
{
        \Pr\{s'\,|\,s, a\} }.
  $  
To facilitate a compact description of $\Pr\{s'\,|\,s,a\}$, hereafter, we employ the shorthand notations
${\brp \triangleq 1-p}$,
${\brq \triangleq 1-q}$, 
${\brmu \triangleq 1-\mu}$,
and 
${c \triangleq \cs+ \ct }$.
  Since for a given action and state, the evolution  of the source, the sample in the buffer, the estimate, and the battery level are independent, the transition probabilities  can be written as
\begin{equation}\label{Eq_tp_dist_pomdp}
\begin{array}{ll}
 \Pr\{s'\,|\,s, a\}   =  &
\Pr\{ X'\,|\,X\}
\Pr\{ \tilde{X}'\,|\,X,\,\tilde{X},\,a\}
\\&
\Pr\{ \hat{X}'\,|\,X,\,\tilde{X},\hat{X},\,a\}
\Pr\{e'\,|\,e,X,\hat{X},\,a\}, 
\end{array}
\end{equation}
where 
 \begin{equation}
     \begin{array}{cc}
        \Pr\{ X'\,|\,X\}=\left\{ 
  \begin{array}{ll}
  p,   &  \text{if} ~ {X}' = X,
  \\
  \brp,   &   \text{if} ~ {X}' \neq X.
    \end{array}
    \right.
     \end{array}
 \end{equation}
 \begin{equation}\label{Eq_Tp_dis_Xtilde}
     \begin{array}{ll}
        \Pr\{ \tilde{X}'\,|\,X,\,\tilde{X},\,a\}=\left\{ 
  \begin{array}{ll}
  1,   &  \text{if} ~ a = 2, \tilde{X}' = X,
  \\
  1,   &  \text{if} ~ a \neq 2, \tilde{X}' = \tilde{X},
  \\
  0, &  \text{otherwise}. 
    \end{array}
    \right.
     \end{array}
 \end{equation}
 \begin{equation}\label{Eq_Tp_dis_Xhat}
\hspace{-1 em}
     \begin{array}{ll}
        \Pr\{ \hat{X}'\,\big|\,X,\tilde{X},\hat{X},\,a\}=\left\{ 
  \begin{array}{ll}
  1,   &  \text{if} ~ a = 2,~ \hat{X}' = \hat{X},~ \hat{X} = X,
  \\
  q,   &  \text{if} ~ a = 2,~ \hat{X}' = X,~ \hat{X} \neq X,
  \\
 \brq,   &  \text{if} ~ a = 2,~ \hat{X}' = \hat{X},~ \hat{X} \neq X,
  \\
  q,   &  \text{if} ~ a = 1,~ \hat{X}' = \tilde{X},\,\hat{X}\neq \tilde{X},
  \\
  \brq,   &  \text{if} ~ a = 1,~ \hat{X}' = \hat{X},\,\hat{X}\neq \tilde{X},
  \\
  1,   &  \text{if} ~ a = 1,~ \hat{X}' = \tilde{X},\,\hat{X}=\tilde{X},
  \\
  1, &  \text{if} ~ a = 0,~ \hat{X}' = \hat{X},
  \\
  0,  & \text{otherwise}.
    \end{array}
    \right.
     \end{array}
 \end{equation}
 \begin{equation}\label{Eq_Tp_dis_e}
     \begin{array}{ll}
        & \Pr\{e'\,\big|\,e,X,\hat{X},\,a\}   =
        \\&
        \left\{ 
  \begin{array}{ll}
  \mu,   & \text{if} ~ a = 0,~ e' = \min\{e+1, E\},
  \\
   \brmu,   & \text{if} ~ a = 0,~ e' = e,
    \\
     \mu,   & \text{if} ~ a = 1,~ e' = e+1 - \ct,
  \\
    \brmu,   & \text{if} ~ a = 1,~ e' = e - \ct,
  \\
  \mu,   & \text{if} ~ a = 2,~ e' = e+1-c,~X\neq \hat{X},
  \\
  \brmu,   & \text{if} ~ a = 2,~ e' = e-c,~X\neq \hat{X},
    \\
  \mu,   & \text{if} ~ a = 2,~ e' = e+1-\cs,~X = \hat{X},
  \\
\brmu,   & \text{if} ~ a = 2,~ e' = e-\cs,~X = \hat{X},
  \\
     0,  & \text{otherwise}.
    \end{array}
    \right.
     \end{array}
 \end{equation}
$\bullet$
\textit{Observation Function:}
 The observation function is
 the probability distribution function of $o(t)$ given state $s(t)$ and action $a(t-1)$, i.e., ${ \Pr\{o(t)\,\big|\, s(t),a(t-1)\}}$. 
 Since the observation always is part of the state,
 the observation function is deterministic, i.e., ${ \Pr\{o(t)\,\big|\,s(t),a(t-1)\}  =  \mathds{1}_{\{o(t)=\left(e(t),  \tilde{X}(t), \hat{X}(t)\right)\}} }$. 
 \\
 $\bullet$ \textit{Cost Function:}
 The immediate cost function at slot $t$ is the distortion, i.e., $C(s(t)) = f(X(t),\hx)$. 
\\ \indent Now, with the POMDP specified above, we follow the belief MDP approach \cite[Ch. 7]{POMDP_AI} to achieve optimal decision-making for the POMDP. Accordingly, in the sequel,   we transform the POMDP into a belief MDP. 
\\ \indent
Let  $I_{\mathrm{C}}(t)$ denote the complete information state at slot $t$ consisting of \cite[Ch. 7]{POMDP_AI}:
 i)     the initial probability distribution over the state space,
ii)  all past and current observations,  
$o(0),\dots, o(t)$,
and iii) all past actions, $a(0), \dots, a(t-1)$.
We define the belief at slot $t$ as
\begin{align}
    b(t)\triangleq \Pr\{X(t) = 1\,\big|\,I_{\mathrm{C}}(t)\}. 
\end{align}
The belief at $t+1$ is updated
 after  executing action
$a(t)$
and receiving observation $o(t + 1)$. 
The belief update can be expressed as:
  \begin{equation}\label{Eq_dis_blfdyn}
  \hspace{-1.5 em}
     \begin{array}{ll}
 & b(t+1) =
 \\&
 \left\{ 
  \begin{array}{ll}
   b(t)p + (1-b(t))\brp, & \text{if}~ a(t) \in\{ 0,1\},
  \\
  p, & \text{if}~ a(t)= 2,~   \tilde{X}(t+1) = 1,
   \\
 \brp,   & \text{if}~ a(t)= 2,~   \tilde{X}(t+1) = 0. 
\end{array}\right. 
    \end{array}
 \end{equation}
\indent 
\textit{Belief MDP Formulation:} 
We now formulate a belief MDP with the following elements:
\\$\bullet$ \textit{(Belief) State:}
The belief state is defined as \begin{equation}\label{Eq_disblf_state}
    \begin{array}{ll}
        z(t) \triangleq \left( e(t), b(t), \tx, \hx \right).
    \end{array}
 \end{equation}
 The state space is denoted by $\mathcal{Z}$.
 \\
$\bullet$ \textit{Action:} The actions are the same as those of the POMDP.
\\ $\bullet$ \textit{(Belief) State Transition Probabilities:} The state transition probabilities from current state ${z= \left( e, b, \tilde{X},\hat{X} \right) }$ to next state ${z'= \left( e', b', \tilde{X}',\hat{X}' \right) }$ is given by
\begin{equation}
    \begin{array}{cc}
         \Pr\{z'\,|\,z,a\} = 
         \displaystyle
         \sum_{X\in\{0,1\}}
         \Pr\{z'\,|\,z,a,X\}
         \Pr\{X\,|\,z,a\},
    \end{array}
\end{equation}
 where\footnote{The conditional probability \eqref{Eq_dis_blf_X} is computed based on the current knowledge about the source state $X(t)$ given in $z(t)$ without taking into account the actual value of the source state revealed right after taking a new sample when action is $a(t) =2$. \label{Footnote_dis_blf}} 
\begin{equation}\label{Eq_dis_blf_X}
     \begin{array}{ll}
\Pr\{X\,|\,z,a\} =
 \left\{ 
  \begin{array}{ll}
   b, & \text{if}~~ X = 1,
  \\
 1-b, & \text{if}~~ X = 0,
\end{array}\right. 
    \end{array}
    \vspace{- 3 mm}
 \end{equation}
  and 
 \begin{equation}
     \begin{array}{ll}
         \Pr\{z'\,|\,z,a,X\} = &
          \Pr\{ \tilde{X}'\,|\,\tilde{X},\,X,\,a\}
\Pr\{ \hat{X}'\,|\,\tilde{X},\hat{X},\,X,\,a\}
\\&
\Pr\{e'\,|\,e,\,\hat{X},\,X,\,a\}
\mathds{1}_{\{b'=b(t+1)\}}, 
     \end{array}
 \end{equation}
 where the first three terms are given below
 and $b(t+1)$ is computed from \eqref{Eq_dis_blfdyn} by setting $\tilde{X}(t+1)=\tilde{X}'$ and $a(t)=a$. 
 \\
$\bullet$ \textit{Cost Function:} The immediate cost function is the expected distortion given by
\begin{equation}\label{Eq_blfMDP_cost_function}
    \begin{array}{ll}
     C(z(t) )  = b(t)f(1,\hx) + (1-b(t)) f(0,\hx).
    \end{array}
\end{equation}
 We want to solve the following \textit{belief} MDP problem for a given initial state $z(0)$:
\begin{equation}\label{Eq_distortion_blfstate}
       \pi^* =
       \argmin_{\pi\in\Pi}  
      \left\{ \limsup_{T\rightarrow \infty}\frac{1}{T}  \sum_{t=1}^T \Bbb{E}\{ C(z(t))  \,\big|\,z(0) \}
      \right\},
\end{equation}
where $\Pi$ is the set of all feasible policies, i.e., the policies that  satisfy the energy causality constraint \eqref{Eq_EnCons}, and
the expectation is with respect to the system randomness and a possibly randomized action selection at each slot by policy $\pi$.
 
The presence of the continuous state variable $b(t)$ in $\eqref{Eq_disblf_state}$ makes finding an optimal policy to problem \eqref{Eq_distortion_blfstate} extremely challenging since the associated state space $\mathcal{Z}$ is infinite.
Nonetheless, in the next proposition,
we express the belief $b(t)$ as a function of the AoI at the transmitter,  subsequently enabling an efficient truncation of the state space $\mathcal{Z}$.

Before presenting the proposition, we shall define the AoI at the transmitter (the AoI, for short). The 
AoI, i.e., 
the age of the packet stored in the transmitter's buffer $\tx$, is the time elapsed since $\tx$ was sampled. 
Let $\theta(t)$ denote the AoI. The AoI evolves as 
\begin{equation}
     \begin{array}{cc}
 \theta(t+1) =
\left\{\begin{array}{ll}
 1, & \text{if}~~   a(t)=2,
   \\
\theta(t) + 1 , & \text{if}~~   a(t)\in\{0,1\}.
 \end{array}\right. 
    \end{array}
 \end{equation}
\begin{Pro}\label{Prop_dis_blf_age}
    Given the last sample stored in the buffer $\tilde{X}(t)$ and the AoI ${ \theta(t) }$, 
    the belief $b(t)$ is given by
   \begin{equation}\label{Eq_dis_BD}
     \begin{array}{cc}
 b(t) =
\left\{\begin{array}{ll}
 0.5\left(1+(2p-1)^{\theta(t)} \right), & \text{if}~~   \tilde{X}(t) = 1,
   \\
0.5\left(1-(2p-1)^{\theta(t)}\right) , & \text{if}~~   \tilde{X}(t) = 0.
 \end{array}\right. 
    \end{array}
 \end{equation}
\end{Pro}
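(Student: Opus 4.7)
The plan is to prove the formula by induction on the AoI $\theta(t)$, using the belief update recursion given in \eqref{Eq_dis_blfdyn}. The underlying intuition is that, by the Markov property of the source, $\tilde{X}(t)$ together with the elapsed time $\theta(t)$ is a sufficient statistic for $X(t)$ given $I_{\mathrm{C}}(t)$: all older observations and feedback carried in $I_{\mathrm{C}}(t)$ are conditionally independent of $X(t)$ once the most recent sample $\tilde{X}(t) = X(t-\theta(t))$ and the number of transitions since then are known.

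First I would handle the base case $\theta(t+1) = 1$, i.e., right after an action $a(t) = 2$ is executed. From the third and fourth branches of \eqref{Eq_dis_blfdyn}, we have $b(t+1) = p$ when $\tilde{X}(t+1) = 1$ and $b(t+1) = \bar{p}$ when $\tilde{X}(t+1) = 0$. These coincide exactly with $0.5(1 + (2p-1))$ and $0.5(1-(2p-1))$, respectively, so \eqref{Eq_dis_BD} holds at $\theta = 1$.

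Next, for the inductive step, suppose \eqref{Eq_dis_BD} holds at some slot $t$ with AoI $\theta(t)$, and that $a(t) \in \{0,1\}$ so that $\theta(t+1) = \theta(t)+1$ and $\tilde{X}(t+1) = \tilde{X}(t)$. The first branch of \eqref{Eq_dis_blfdyn} gives $b(t+1) = b(t)p + (1-b(t))\bar{p}$, which after rearrangement yields the key identity
\begin{equation}
b(t+1) - \tfrac{1}{2} = (2p-1)\bigl(b(t) - \tfrac{1}{2}\bigr).
\end{equation}
Plugging in the inductive hypothesis $b(t) - 0.5 = \pm 0.5 (2p-1)^{\theta(t)}$ (with the sign determined by $\tilde{X}(t)$) produces $b(t+1) - 0.5 = \pm 0.5 (2p-1)^{\theta(t)+1}$, which is exactly \eqref{Eq_dis_BD} at $\theta(t+1)$.

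I do not anticipate a serious obstacle: the argument is essentially the closed-form expression for the $n$-step transition probabilities of a symmetric two-state Markov chain, recovered here through the belief recursion. The only point that warrants care is the justification that $(\tilde{X}(t),\theta(t))$ is a sufficient statistic for computing $b(t)$, which requires noting that between sampling instants the belief evolves deterministically under \eqref{Eq_dis_blfdyn} and that each sampling event completely refreshes the belief to a value depending solely on $\tilde{X}(t+1)$; hence the entire history collapses to these two quantities. Once this is observed, the induction closes cleanly.
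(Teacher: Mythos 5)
Your proof is correct, but it takes a different route from the paper's. The paper argues directly from the definition $b(t)=\Pr\{X(t)=1\,|\,I_{\mathrm{C}}(t)\}$: by the Markov property, conditioning on $I_{\mathrm{C}}(t)$ collapses to conditioning on $X(t-\theta(t))=\tilde{X}(t)$, so $b(t)$ \emph{is} the $\theta(t)$-step transition probability of the symmetric binary chain, whose closed form $0.5\left(1\pm(2p-1)^{\theta(t)}\right)$ is then quoted from the literature. You instead take the belief recursion \eqref{Eq_dis_blfdyn} as the starting point and run an induction on $\theta$: the sampling action resets the belief to $p$ or $\bar p$ (base case), and the affine map $b\mapsto bp+(1-b)\bar p$ contracts $b-\tfrac12$ by the factor $(2p-1)$ at each idle/retransmit step. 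The two arguments are mathematically equivalent --- your induction is precisely a derivation of the $n$-step transition probabilities that the paper cites --- but they buy slightly different things. The paper's version makes the sufficient-statistic reduction the centerpiece and is shorter; yours is more self-contained (no external citation needed for the closed form) and makes transparent why the belief converges to $0.5$ geometrically in $\theta$, which is exactly the property later used to justify truncating the state space at $N$. The one point you flag correctly as needing care --- that $(\tilde{X}(t),\theta(t))$ determines $b(t)$ because each sampling event refreshes the belief and the recursion is deterministic in between --- is the same observation the paper makes via the Markov property, so no gap remains.
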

\begin{proof}
See Appendix \ref{App_dist_blf}. 
\end{proof}
From \eqref{Eq_dis_BD}, we can observe that for sufficiently large values of $\theta(t)$, i.e., $\theta(t) \ge N$, the belief approaches  $0.5$ exponentially fast as $\theta(t)$ increases. Thus, for an appropriate choice of $N$, the belief MDP problem \eqref{Eq_distortion_blfstate} can be reformulated as a \textit{finite-state} MDP by replacing the belief $b(t)$ in the state definition~\eqref{Eq_disblf_state} with $\theta(t)\in\{1,2,\dots, N\}$. This is equivalent to re-defining the state as 
\vspace{-2 mm}
\begin{align}\label{Eq_state_blf_AoI}
    \underline{z}(t) = \left( e(t), \theta(t), \tx, \hx \right).
\end{align}
Let $ \underline{\mathcal{Z}} $ denote the corresponding state space. 
The choice of $N$ now becomes a design parameter
and its impact on the system performance will be studied in Section~\ref{Sec_NumRes} (see Fig.~\ref{Fig_N_mse}). 
\\\indent 
Note that the actions in the above mentioned finite-state  MDP are the same as those in the POMDP. Also, the state transition probability  $\Pr\{ \underline{z}'\,|\,\underline{z}, a \}$ is given by 
\begin{equation}
\nonumber
    \begin{array}{cc}
    \displaystyle
         \Pr\{ \underline{z}'\,|\,\underline{z}, a \} 
         = 
         \sum_{X\in\{0,1\}}
          \Pr\{\underline{z}'\,|\, \underline{z},a,X\}
         \Pr\{X\,|\, \underline{z},a\},
    \end{array}
    \vspace{-1 em}
\end{equation}
where 
\begin{equation}
     \begin{array}{ll}
& \Pr\{X\,|\,\underline{z},a\}  = 
\\&
 \left\{ 
  \begin{array}{ll}
  0.5\left(1+(2p-1)^{\theta(t)} \right), & \text{if}~~ X = 1,~\tilde{X}=1,
  \\
   0.5\left(1-(2p-1)^{\theta(t)} \right), & \text{if}~~ X = 0,~\tilde{X}=1,
  \\
  0.5\left(1+(2p-1)^{\theta(t)} \right), & \text{if}~~ X = 0,~\tilde{X}=0,
  \\
 0.5\left(1-(2p-1)^{\theta(t)} \right), & \text{if}~~ X = 1,~\tilde{X}=0.
\end{array}\right. 
    \end{array}
 \end{equation}
 and 
 \begin{equation}
 \nonumber
     \begin{array}{ll}
         \Pr\{\underline{z}'\,|\, \underline{z},a,X\} 
          = &  
           \Pr\{ \tilde{X}'\,|\,X,\,\tilde{X},\,a\}
\Pr\{ \hat{X}'\,|\,X,\,\tilde{X},\hat{X},\,a\}
\\&
\Pr\{e'\,|\,e,X,\hat{X},\,a\}
\Pr\{\theta'\,|\,\theta,\,a\},
     \end{array}
 \end{equation}
 where the first three terms are given by \eqref{Eq_Tp_dis_Xtilde}-\eqref{Eq_Tp_dis_e} and
 \begin{equation}\label{Eq_AoIDyn}
     \begin{array}{cc}
     \nonumber
        \Pr\{\theta'\,|\,\theta,\,a\} =\left\{ 
  \begin{array}{ll}
  1,   & \text{if} ~ a \neq 2,~ \theta' =  \min\{\theta + 1, N\},
  \\
     1,  & \text{if} ~ a = 2,~ \theta'=1,
     \\
     0,  & \text{otherwise}.
    \end{array}
    \right.
     \end{array}
 \end{equation}
Moreover, the immediate cost $ c(\underline{z}(t))$ is given by \eqref{Eq_blfMDP_cost_function}, where $b(t)$ is obtained from \eqref{Eq_dis_BD}, with $\theta(t)\in\{1,2,\dots, N\}$.

Note that, in general, the optimal value of the MDP formulated above with state defined in  \eqref{Eq_state_blf_AoI} may depend on the initial state $\underline{z}(0)$.
However, Proposition \ref{Prop_ComMDP} below shows that the MDP is communicating, and, thus, its optimal value is independent of the initial state \cite[Prop. 4.2.3]{bertsekas2007dynamic}.  
\begin{Pro}\label{Prop_ComMDP}
The  MDP with state \eqref{Eq_state_blf_AoI} is communicating.
\end{Pro}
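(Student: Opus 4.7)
The goal is to show that every pair of states in $\underline{\mathcal{Z}}$ communicates, i.e., for each pair there exists a feasible sequence of actions taking one to the other with positive probability in finite time. Because $\underline{\mathcal{Z}}$ is finite, this reduces, by transitivity of reachability, to fixing a single reference state $\underline{z}_r := (E, N, 0, 0)$ and establishing two one-way reachabilities: (i) $\underline{z}_r$ is reachable from every $\underline{z} \in \underline{\mathcal{Z}}$ with positive probability, and (ii) every $\underline{z}^\star \in \underline{\mathcal{Z}}$ is reachable from $\underline{z}_r$ with positive probability.

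For (i), my main workhorse is the idle action $a = 0$: it is always feasible, leaves $\tilde{X}$ and $\hat{X}$ unchanged, increments the battery by one unit with probability $\mu > 0$, and grows $\theta$ until it saturates at $N$ (see \eqref{Eq_Tp_dis_Xtilde}--\eqref{Eq_Tp_dis_e}). Playing $a = 0$ for sufficiently many consecutive slots therefore takes any initial $\underline{z}$ to $(E, N, \tilde{X}, \hat{X})$ with positive probability. I would then apply one $a = 2$ to move $(\tilde{X}, \hat{X})$ to $(0, 0)$: by Proposition \ref{Prop_dis_blf_age} and $p \in (0, 1)$ the event $\{X = 0\}$ has positive probability, and when $\hat{X} = 1$ the ensuing transmission succeeds with probability $q > 0$. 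A final batch of $a = 0$'s refills the battery back to $E$ and brings $\theta$ back to $N$, reaching $\underline{z}_r$.

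For (ii), starting from $\underline{z}_r$, I would first install the target pair $(\tilde{X}^\star, \hat{X}^\star)$ and then adjust $(e, \theta)$. Each of the four targets $(\tilde{X}^\star, \hat{X}^\star) \in \{0,1\}^2$ can be reached from $(0, 0)$ using at most two applications of $a = 2$ interleaved with $a = 0$ recharging: the first installs $\tilde{X}^\star$ via a freshly drawn sample $X$ (both values positive by $p \in (0, 1)$), and, when needed, the second installs $\hat{X}^\star$ while preserving $\tilde{X}^\star$ by conditioning on either a successful ($q > 0$) or a deliberately failed ($1 - q > 0$) transmission. Once $(\tilde{X}^\star, \hat{X}^\star)$ is in place, the pair $(e, \theta)$ is steered to $(e^\star, \theta^\star)$ by combining three primitives that all preserve $(\tilde{X}^\star, \hat{X}^\star)$ with positive probability: $a = 0$ (grows $\theta$ up to $N$ and adds energy stochastically), $a = 1$ (grows $\theta$ while burning $c_t$ units, preserving $\hat{X}$ deterministically when $\tilde{X}^\star = \hat{X}^\star$ and with probability $1 - q$ otherwise), and, when $\theta$ must be reset, $a = 2$ conditioned on $X = \tilde{X}^\star$ and on the matching transmission outcome.

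The main obstacle is part (ii): landing on the exact $(e^\star, \theta^\star)$ without perturbing the already-installed pair $(\tilde{X}^\star, \hat{X}^\star)$, because $a = 2$ is the only action that can decrease $\theta$ and it generically overwrites the buffer. This is resolved by the careful conditioning on $X = \tilde{X}^\star$ and on the matching transmission outcome, each having strictly positive probability under the standing assumptions $p, q, \mu \in (0, 1)$. All elementary transitions invoked in the construction therefore have positive probability, so the concatenated sequence reaches $\underline{z}^\star$ from $\underline{z}_r$ in finite time with positive probability. Combining (i) and (ii) yields the communicating property.
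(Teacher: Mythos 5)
Your proof is correct and follows essentially the same route as the paper's: both establish the communicating property by exhibiting explicit finite action sequences with positive probability, using idle slots to accumulate energy and saturate $\theta$, the action $a=2$ to reset $\theta$ and reconfigure $(\tilde{X},\hat{X})$ (exploiting $0<p<1$ and $0<q<1$), and a final adjustment to land on the target $(e,\theta)$. The paper organizes this as direct pairwise accessibility under a fixed randomized policy, split on the cases $e'\ge e$ and $e'<e$ (draining energy by repeated sampling in the latter case before recharging), whereas you route everything through a reference state; this is an organizational difference only, and the energy-steering step in your part (ii) is the one place where the paper's version is more explicit than yours.
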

\begin{proof}
See Appendix \ref{App_ComMDP}.
\end{proof}
\indent
Proposition \ref{Prop_ComMDP} further allows us to find an optimal policy by solving the Bellman's equation \cite[p. 198]{bertsekas2007dynamic}, i.e., find a scalar $\bar{h}$ and a vector $h$ satisfy \cite[Prop. 4.2.1]{bertsekas2007dynamic}:\blue{\footnote{\blue{It is worth noting that the presence of multiple types of actions and a multi-dimensional
state in the problem makes the structure of an
optimal policy very complex, most-likely having a multi-threshold structure. Hence, finding either a closed-form expression of the optimal policy or an approximate dynamic programming solution to derive an optimal policy may be very complicated, if not impossible. Nevertheless, we will provide numerical results that show a multi-threshold structure of the optimal policy with respect to both the AoI and the battery level (e.g., in Fig.~\ref{Fig_dis_sts}).
}}}
\begin{align}
\label{Eq_bellman_distortion}
    &\bar h + h(\underline{z}) = \min_{a\in \mathcal{A}}\left\{C(\underline{z}) + \sum_{\underline{z}'\in \mathcal{\underline{Z}}}\mathrm{Pr}\{\underline{z}' \mid \underline{z},a\}h(\underline{z}')  \right\},~ \forall \underline{z}\in \mathcal{\underline{Z}},
\end{align}
where $\bar h$ is actually the optimal value of the MDP problem; furthermore,  the actions $a^*({\underline z}) $ that attain the minimum in \eqref{Eq_bellman_distortion} for every $\underline z$ in the state space $\mathcal{\underline Z}$, provide an optimal deterministic policy $\pi^*$.  
\\\indent 
To solve the Bellman's equation \eqref{Eq_bellman_distortion} we apply the RVI algorithm that turns the equation into the following iterative procedure that for all $\underline{z}\in\underline{\mathcal{Z}}$ and for iteration index ${ n=1,2,\dots }$, we have
\begin{equation}\label{Eq_RVIal}
\begin{array}{ll}
&
\displaystyle
V^{n+1}(\underline{z}) =
\min_{a \in \mathcal{A}}
\left\{C(\underline{z})+ \underset{\underline{z}^{\prime} \in \underline{\mathcal{Z}}}{\sum} 
\operatorname{Pr}\left\{\underline{z}^{\prime} \mid \underline{z}, a\right\} 
h^n(\underline z')\right\}
\\
&
\displaystyle
h^n(\underline z) = V^n(\underline{z})
- V^n(\underline{z}_{\mathrm{ref}}),
\end{array}
\end{equation}
where $\underline{z}_{\mathrm{ref}} \in \underline{\mathcal{Z}}$ is an arbitrarily chosen reference state, and we initialize $ { V(\underline{z}) = 0},~\forall\, \underline{z}\in \underline{\mathcal{Z}}, $ and repeat the process until it converges. 
Once the iterative process of the RVI algorithm converges, the algorithm returns an optimal policy 
$
\{a^{*}(\underline{z})\}$ 
and the optimal value, which equals $V({\underline{z}_{\mathrm{ref}}})$.  \blue{A full description of the RVI algorithm is given in Algorithm~\ref{alg_RVI}, where $\epsilon$ is a small positive number.}

\begin{algorithm}
    \SetKwInOut{Inputi}{Initialize}
    \SetKwInOut{run}{RUN}
     \SetKwInOut{output}{Output}
     \SetKwInOut{Output}{Output}
     \SetKwComment{Comment}{/*}{ }
     \SetKwRepeat{Do}{do}{while}
    \Inputi{ $\underline{z}_{\mathrm{ref}}$, $\epsilon$, $n=0$, set $h^0(\underline{z}) =0$ for all $\underline{z}\in\underline{\mathcal{Z}}$}
    \Do{$\displaystyle\max_{\underline{z}\in\mathcal{\underline{Z}}}|h^{n}(\underline{z})-h^{n-1}(\underline{z})|\geq\epsilon$}{
     $n = n+1$\\
    \For{$
    \displaystyle
    \underline{z}\in\mathcal{\underline{Z}}$}{
    $\displaystyle V^{n}(\underline{z}) = \min_{a\in \mathcal{A}}\left[C(\underline{z}) +\sum_{\underline{z}'\in \mathcal{\underline{Z}}}{\Pr}(\underline{z}' \mid \underline{z},a)h^{n-1}(\underline{z}')\right]$\\
    $h^{n}(\underline{z}) = V^{n}(\underline{z})-V^{n}(\underline{z}_{\mathrm{ref}})$\\
    } }
    \Comment{Generate a (deterministic)  policy}
    
    $\displaystyle \pi(\underline{z})=\argmin_{a\in \mathcal{A}}\left[C(\underline{z}) +\sum_{\underline{z}'\in \mathcal{\underline{Z}}}{\Pr}(\underline{z}' \mid \underline{z},a)h(\underline{z}')\right],~{\forall \underline{z}\in \mathcal{\underline{Z}}}$
 
    \KwOut{An optimal policy $ \pi^*=\pi$, the optimal value $C^* = V(\underline{z}_{\mathrm{ref}})$} 
    \caption{\blue{
    The RVIA algorithm }}
    \label{alg_RVI}
\end{algorithm}


\subsection{{The Age of Incorrect Information Metric}} \label{Subsec_AoII} 
In this section, we solve problem \eqref{Org_P1} with ${ h(t)= \delta(t) }$, where $\delta(t)$ is AoII defined in \eqref{Eq_AoII_def}.  
 AoII is a function of the source state, which is not continuously observable due to the sampling cost. Thus, we first model problem \eqref{Org_P1} as a POMDP 
  and subsequently cast it into a belief MDP problem. 
\blue{Then, to solve the belief MDP problem with infinite state space, we cast a finite-state MDP problem under the error-free channel setup, which is then solved using the RVI algorithm. Furthermore, for the error-prone channel setup, a deep reinforcement learning policy is proposed in Sec.~\ref{Sec_DL}.} 
  
  The POMDP is described by the following  elements:
\\
$\bullet$
\textit{State:} Let $\ro\in\{0,1\}$ indicate whether the last sample in the buffer $\tx$ equals to the estimate $\hx$, defined as
   \begin{equation}
     \begin{array}{cc}
 \ro \triangleq
\left\{\begin{array}{ll}
 0, & \text{if}~~   \tilde{X}(t) = \hx,
   \\
1 , & \text{if}~~   \tilde{X}(t) \neq \hx.
 \end{array}\right. 
    \end{array}
 \end{equation}  
Then, we define the state of the POMDP at slot $t$ as 
\begin{equation}
    \label{Eq_aoii_pomdp_state}
    s(t) = \left( e(t), \delta(t), \ro\right). 
\end{equation}
$\bullet$ \textit{Observation:}
The observation at slot $t$  is $o(t) =  \left( e(t),\ro\right)$. 
\\
$\bullet$
\textit{Action:} Action at slot $t$ is $a(t)$ as defined in Sec.~\ref{Sec_SM}.
\\
$\bullet$
\textit{State Transition Probabilities:} 
The  transition probabilities from  current state ${s=(e,\delta,\rho)}$ to  next state ${s'=(e',\delta',\rho')}$ under a given action $a$ is denoted by  
$
{
        \Pr\{s'\,|\,s, a\} }. 
   $  
  Since for a given action and state, the evolution of the indicator $\rho$, AoII $\delta$, and the battery level $e$ are independent,
  the transition probabilities 
 can be written as
\begin{equation}\label{Eq_AoII_pomdp_tp}
\nonumber
 \hspace{-1 em}
 \begin{array}{ll}
     \Pr\{s'\,|\,s, a\} = 
 \Pr\{ \rho'\,\big|\,\rho,\,\delta,\, a\}
 \Pr\{\delta'\,|\,\delta,\,a\}  
 \Pr\{e'\,|\,e,\delta,\,a\},
 \end{array}
 \end{equation}
 where
\begin{equation}\label{Eq_aoii_pomdp_rho_dyn}
     \begin{array}{cc}
        \Pr\{ \rho'\,\big|\,\rho,\,\delta,\, a\}=\left\{ 
  \begin{array}{ll}
  1,   &  \text{if} ~ a = 0, ~\rho' = \rho, 
  \\
  q,   &  \text{if} ~ a = 1, ~\rho' = 0,
  \\
 \brq,   &  \text{if} ~ a = 1, ~\rho' = \rho,
 \\
  1,   &  \text{if} ~ a = 2,~\delta = 0, ~\rho' = 0,
  \\
  q,   &  \text{if} ~ a = 2,~\delta \neq 0, ~\rho' = 0,
  \\
 \brq,   &  \text{if} ~ a = 2,~\delta \neq 0, ~\rho' = 1,
 \\
 0, & \text{otherwise}.
    \end{array}
    \right.
     \end{array}
 \end{equation}
\begin{equation}
     \begin{array}{ll}
        \Pr\{\delta'\,\big|\,\delta,\,a=0\}=\left\{  
  \begin{array}{ll}
  p,   &  \text{if} ~ \delta = 0,~\delta' = 0,  
  \\
  \brp,   &  \text{if} ~ \delta = 0,~\delta' = 1,
  \\
\brp,   &  \text{if} ~ \delta \neq 0,~\delta' = 0, 
 \\
p,   &  \text{if} ~ \delta \neq 0,~\delta' = \delta + 1,
 \\
 0, & \text{otherwise}.
    \end{array}
    \right.
     \end{array}
 \end{equation}
 \begin{equation}
     \begin{array}{ll}
        \Pr\{\delta'\,\big|\,\delta,\,a =  1\}=\left\{ 
  \begin{array}{ll}
  q \brp,   &  \text{if} ~ \delta = 0,~\delta' = 0,  
  \\
  qp,   &  \text{if} ~ \delta = 0,~\delta' = 1,  
  \\
  \brq p,   &  \text{if} ~ \delta = 0,~\delta' = 0,  
  \\
 \brq \brp,   &  \text{if} ~ \delta = 0,~\delta' = 1, 
  \\
 qp,   &  \text{if} ~ \delta \neq 0,~\delta' = 0,
  \\
 q \brp,   &  \text{if} ~ \delta \neq 0,~\delta' = \delta+1, 
 \\
 \brq \brp,   &  \text{if} ~ \delta \neq 0,~\delta' = 0,
\\
\brq p,   &  \text{if} ~ \delta \neq 0,~\delta' = \delta+1,
 \\
 0, & \text{otherwise}.
    \end{array}
    \right.
     \end{array}
 \end{equation}
 \begin{equation}
     \begin{array}{ll}
        \Pr\{\delta'\,\big|\,\delta,\,a =  2\}=\left\{ 
  \begin{array}{ll}
  p,   &  \text{if} ~ \delta = 0,~\delta' = 0,  
  \\
 \brp,   &  \text{if} ~ \delta = 0,~\delta' = 1,  
  \\
 qp,   &  \text{if} ~ \delta \neq 0,~\delta' = 0,
  \\
 q \brp,   &  \text{if} ~ \delta \neq 0,~\delta' = 1, 
 \\
\brq \brp,   &  \text{if} ~ \delta \neq 0,~\delta' = 0,
\\
\brq p,   &  \text{if} ~ \delta \neq 0,~\delta' = \delta+1,
 \\
 0, & \text{otherwise},
    \end{array}
    \right.
     \end{array}
 \end{equation} 
 and  $ \Pr\{e'\,|\,e,\delta,\,a\} $ can be derived from \eqref{Eq_Tp_dis_e} by replacing $X=\hat X$  with $\delta = 0$, and $X\neq \hat X$  with $\delta \neq 0$.
\\
$\bullet$
\textit{Observation Function:}
 The observation function is ${ \Pr\{o(t)\,|\,s(t),a(t-1)\} }$, which  is a deterministic function, i.e., ${ \Pr\{o(t)\,\big|\,s(t),a(t-1)\}  =  \mathds{1}_{\{o(t)=\left(e(t), \rho(t)\right)\}} }$.
 \\
 $\bullet$
 \textit{Cost Function:}
 The immediate cost function at slot $t$ is the AoII, i.e., $C(s(t)) = \delta(t)$. 

Similar to 
Section \ref{SubSec_MSE}, we propose the belief MDP approach to provide an optimal policy for the POMDP.
To this end,  we first define the belief at slot $t$ as
 \begin{equation}
     b_i(t)\triangleq \Pr\left\{\delta(t) = i\,\big|\,I_{\mathrm{C}}(t)\right\},~\, 
     i = 0,1,\dots,
 \end{equation}
 where $I_{\mathrm{C}}(t)$ is the complete information. 
The belief at $t+1$ is updated
 as a function of belief $\{b_i(t)\}_{i= 0,1,\dots}$,
 observation $o(t+1)$,
and  action $a(t)$. The following proposition describes the belief update.  
    \begin{Pro}\label{Prop_AoII_blf_gnr}
   Given  belief $\{b_i(t)\}_{i= 0,1,\dots}$,
 observation ${\rho(t+1)}$,
and  action $a(t)$, the belief update is given by:
    \\
If $a(t) = 0$, \text{or} $a(t) = 1, \rho(t+1)=1$:
  \begin{equation}\label{Eq_BlfupdAoII_idle}
  \hspace{-1 em}
     \begin{array}{ll}
 b_i(t+1) =
 \left\{ 
  \begin{array}{ll}
   b_0(t)p + \left(1-b_0(t) \right) \brp, & i=0,
   \\
 \brp  b_{0}(t)  , & i= 1,
    \\
 p  b_{i-1}(t)  , & i= 2,3,\ldots, 
\end{array}\right. 
    \end{array}
 \end{equation}
 if $a(t) =1,  \rho(t+1) =0$:
 \begin{equation}
 \hspace{-1 em}
     \begin{array}{ll}
 b_i(t+1) =
 \left\{ 
  \begin{array}{ll}
 b_0(t)\brp + (1-b_0(t))p, & i=0,
   \\
  b_{0}(t) p  , & i= 1,
    \\
 \brp  b_{i-1}(t)  , & i= 2,3,\ldots, 
\end{array}\right. 
    \end{array}
 \end{equation}
  if $a(t) = 2, \rho(t+1)=1$:
  \begin{equation}
     \begin{array}{ll}
 b_i(t+1) =
 \left\{ 
  \begin{array}{ll}
  \brp, & i=0,
    \\
b_{i-1}(t) p  , & i= 1, 2,\ldots, 
\end{array}\right. 
    \end{array}
 \end{equation}
 and if $a(t) = 2, \rho(t+1)=0$:
  \begin{equation}\label{Eq_BlfupAoII_reset}
     \begin{array}{ll}
 b_i(t+1) =
 \left\{ 
  \begin{array}{ll}
  p, & i=0,
   \\
 \brp,   & i= 1,
    \\
0  , & i= 2,3,\ldots. 
\end{array}\right. 
    \end{array}
 \end{equation}
 \end{Pro}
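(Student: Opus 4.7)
The plan is to derive each of the four cases~\eqref{Eq_BlfupdAoII_idle}--\eqref{Eq_BlfupAoII_reset} via the standard POMDP Bayesian filter:
\[
b_i(t+1) = \frac{\sum_{j\ge 0}\Pr\{\delta(t+1)=i,\,\rho(t+1)\mid\delta(t)=j,\,\rho(t),\,a(t)\}\,b_j(t)}{\sum_{k,j\ge 0}\Pr\{\delta(t+1)=k,\,\rho(t+1)\mid\delta(t)=j,\,\rho(t),\,a(t)\}\,b_j(t)},
\]
using that $\rho(t)\in I_{\mathrm{C}}(t)$ is fully observable and that, conditional on $(\delta(t),\rho(t),a(t))$, the one-step evolution of $(\delta,\rho)$ is Markov. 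The joint one-step conditional will be assembled from the primitive rules stated in Sec.~\ref{Sec_SM}: the binary symmetric source with self-transition probability $p$, the sampler rule $\tilde X(t+1)=X(t)$ under $a(t)=2$, the estimator rule $\hat X(t+1)=\tilde X(t)$ on successful reception, and the channel success probability $q$.

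I would then handle the four $(a(t),\rho(t+1))$ combinations in parallel. For $a(t)=0$, or $a(t)=1$ with $\rho(t+1)=1$, the estimate is not refreshed (trivially in the first sub-case, and because the retransmission failed in the second), and the likelihood $\Pr\{\rho(t+1)\mid\delta(t+1),a(t),\rho(t)\}$ is independent of $\delta(t+1)$; hence Bayes reduces to the free AoII dynamics under a frozen $\hat X$, and direct bookkeeping reproduces~\eqref{Eq_BlfupdAoII_idle}. For $a(t)=1$ with $\rho(t+1)=0$ the retransmission must have succeeded, so $\hat X(t+1)=\tilde X(t)$; combining this with the binary source transition yields the formula with the roles of $p$ and $\brp$ interchanged relative to the idle case. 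For $a(t)=2$ with $\rho(t+1)=0$, either sub-event consistent with the observation (no transmission because $X(t)=\hat X(t)$, or successful transmission because $X(t)\neq\hat X(t)$) forces $\hat X(t+1)=X(t)=\tilde X(t+1)$, so the posterior collapses onto the one-step source transition, giving~\eqref{Eq_BlfupAoII_reset}.

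The most delicate case is $a(t)=2$ with $\rho(t+1)=1$, where the observation is informative about $\delta(t)$: the event $\rho(t+1)=1$ under $a(t)=2$ can only arise from the chain $X(t)\neq\hat X(t)$ followed by an unsuccessful transmission, since $X(t)=\hat X(t)$ suppresses the transmission and would deterministically give $\rho(t+1)=0$. Bayes therefore has to zero out the mass on $\delta(t)=0$, and the normalization must be worked out explicitly before the binary source transition (which sends $\delta(t)=k\ge 1$ to $0$ w.p.\ $\brp$ and to $k+1$ w.p.\ $p$) is applied; this is where I expect the bulk of the computation to lie. Throughout, the main technical bookkeeping will be to keep the conditioning on the hidden binary variables $X$, $\tilde X$, $\hat X$ consistent with the observed pair $(\rho(t),\rho(t+1))$ and the estimator/sampler rules; once this is done, the four update formulas follow by inspection.
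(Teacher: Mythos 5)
Your plan coincides with the paper's proof for three of the four cases: the paper also computes each $b_i(t+1)$ directly from the value that the observation forces on $\hat X(t+1)$ together with the binary source transition, and since the likelihood of $\rho(t+1)$ is flat in $\delta(t)$ when $a(t)\in\{0,1\}$, and the next-step AoII distribution is insensitive to the posterior over $\delta(t)$ when $a(t)=2,\rho(t+1)=0$, your normalized Bayes filter reduces to the same bookkeeping there and reproduces \eqref{Eq_BlfupdAoII_idle}, the $a(t)=1,\rho(t+1)=0$ case, and \eqref{Eq_BlfupAoII_reset}.

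The difficulty is exactly the case you flag as delicate, $a(t)=2$, $\rho(t+1)=1$. You are right that, as an exact Bayes filter, this observation has likelihood $0$ on $\{\delta(t)=0\}$ and $\brq$ on $\{\delta(t)\ge 1\}$; but carrying that through gives $b_0(t+1)=\brp$, $b_1(t+1)=0$, and $b_i(t+1)=p\,b_{i-1}(t)/(1-b_0(t))$ for $i\ge 2$, which is \emph{not} the formula asserted in the proposition, namely $b_i(t+1)=p\,b_{i-1}(t)$ for all $i\ge 1$ (in particular $b_1(t+1)=p\,b_0(t)\neq 0$ in general). The paper's own proof does not zero out the mass on $\delta(t)=0$ or renormalize: it conditions only on the induced event $\hat X(t+1)\neq X(t)$ and then sets $\Pr\{\delta(t)=i-1\mid \hat X(t+1)\neq X(t),\mathcal B\}=\Pr\{\delta(t)=i-1\mid\mathcal B\}=b_{i-1}(t)$, i.e., it deliberately does not let the observation reweight the prior over $\delta(t)$ --- a convention the paper announces in a footnote (``computed without taking into account the actual value of the source revealed right after taking a sample when $a(t)=2$''). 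So your derivation, executed faithfully, proves a different (strictly Bayesian) update rule and does not establish the stated third case; to match the proposition you must either adopt the paper's convention in that one branch, or explicitly note that the exact filter disagrees with the claimed formula at $i=1$ and in the normalization whenever $b_0(t)>0$.
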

 \begin{proof}
 See Appendix \ref{App_AoII_blf_gnr}.
 \end{proof}
 \textit{Belief MDP Formulation:}
We now formulate a belief MDP with the following elements:
\\
$\bullet$ \textit{ (Belief) State:}
The belief state is defined as
\begin{equation}\label{Eq_aoii_blfstate}
    \begin{array}{cc}
        z(t) \triangleq \left( e(t), \{b_i(t)\}_{i= 0,1,\dots},\ro \right),
    \end{array}
\end{equation}
$\bullet$ \textit{Action:} The actions are the same as those of the POMDP.
\\ $\bullet$ \textit{(Belief) State Transition Probabilities:} The state transition probabilities from current state ${z= \left( e, \{b_i\}_{i= 0,1,\dots},\rho \right)}$ to next state ${z' = \left( e', \{b_i'\}_{i= 0,1,\dots},\rho' \right)}$ is given by 
\begin{align}
\nonumber
\textstyle
         \Pr\{z'\,|\,z,a\} = 
         \sum_{i=0}^{\infty} 
         \Pr\{\delta=i\,|\,z,a\}
         \Pr\{z'\,|\,z,a,\delta=i\},
\end{align}
where
${\Pr\{\delta=i\,|\,z,a\}=b_i(t)}$,\footnote{This conditional probability is computed without taking into account the actual value of the source revealed right after taking a sample when ${a(t)=2 }$.} and
\begin{align}
\nonumber
    \begin{array}{ll}
        \Pr\{z'\,|\,z,a,\delta\}=
&\Pr\{ \rho'\,\big|\,\rho,\,\delta,\, a\}
 \Pr\{e'\,|\,e,\delta,\,a\}
 \\&
 \prod_{i=0}^{\infty}
 \mathds{1}_{\{b'_i=b_i(t+1) \}},
    \end{array}
\end{align}
where the first two terms are respectively given by \eqref{Eq_aoii_pomdp_rho_dyn} and \eqref{Eq_Tp_dis_e}, and $b_i(t+1)$ is given by Proposition \ref{Prop_AoII_blf_gnr} by setting $a(t)=a$ and $\rho(t+1)=\rho'$.
 \\
$\bullet$ \textit{Cost Function:} The immediate cost function is the expected AoII given by
\begin{equation}\label{Eq_Aoii_blfcost}
        C(z(t)) = \sum_{i= 0}^{\infty} b_i(t)i.
\end{equation}
\indent 
We aim 
to solve
the following belief MDP problem for a given initial state $z(0)$:
\begin{align}\label{Eq_Prob_AoIIMDP}
\hspace{-1.2 em}
       \pi = 
       \argmin_{\pi\in\Pi}\left\{ \limsup_{T\rightarrow \infty} \frac{1}{T}  \sum_{t=1}^T \Bbb{E}\{ C(z(t)) \,\big|\,z(0)\}  
      \right\}.
\end{align}
Similarly to problem \eqref{Eq_distortion_blfstate}, the state space of problem~\eqref{Eq_Prob_AoIIMDP} is infinite, which makes solving the problem challenging.
Nonetheless, we will provide an optimal policy 
for
the case where the channel is perfect,
and propose a  learning-based policy for the general case. 
\subsubsection{An Optimal Policy Under Perfect Channel} Under  the perfect channel,\footnote{\blue{This assumption allows us to express the belief associated with AoII as a function of the AoI at the transmitter and then cast a finite-state MDP by effectively bounding the AoI, as will become clear in this section.}}
we always have $\tx=\hx$, and thus, 
the indicator $\rho(t)$ in the state definition in \eqref{Eq_aoii_pomdp_state}, and the re-transmission action $a(t)=1$ are unnecessary in the belief MDP.
Then, the belief update follows only  \eqref{Eq_BlfupdAoII_idle} or \eqref{Eq_BlfupAoII_reset} depending on the taken actions. 
The following proposition shows that the belief $\{b_i\}_{i= 0,1,\dots}$ can be expressed as a function of $\theta(t)$, which allows us to effectively truncate the state space of the belief MDP problem. 
\begin{Pro}\label{Prop_AoII_blf_prfch}
    Suppose that the value of the AoI $\theta(t)$ is $ n $, 
    ${n=1,2,\dots}$.
    Then, for the perfect channel, i.e.,~${q=1}$, the belief at each slot
    is given by 
   \begin{equation}\label{Eq_AoII_BD}
     \begin{array}{cc}
 b_i(t) =
\left\{\begin{array}{ll}
  g(n), & i=0,
   \\
g(n-i)\brp p^{(i-1)}, & i= 1,\dots,n,
   \\
  0, & i=n+1,\dots,
 \end{array}\right. 
    \end{array}
 \end{equation}
 where function $g(n)$ is characterized as
 \begin{equation}
\hspace{-1 em}
     \begin{array}{ll}
        g(n) \triangleq \Pr\{\delta(t) = 0\,\big|\,\theta(t)=n\}
 =
 0.5(1+(2p-1)^n),
     \end{array}
 \end{equation}
 and $g(0) \triangleq 1$.
\end{Pro}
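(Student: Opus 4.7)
My plan is to exploit the perfect-channel assumption to reduce the belief to a function of AoI and then verify the claimed formula by induction on $n=\theta(t)$, using the belief update rules established in Proposition~\ref{Prop_AoII_blf_gnr}.

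First I would observe that with $q=1$ every transmitted packet is delivered, so whenever $a(t)=2$ is executed (which is the only action that writes into the buffer) the resulting sample $\tilde X(t+1)$ is simultaneously delivered to the monitor, giving $\hat X(t)=\tilde X(t)$ for all $t$, hence $\rho(t)\equiv 0$. Consequently the retransmission action $a(t)=1$ is vacuous, and the only belief updates that can actually occur are \eqref{Eq_BlfupdAoII_idle} with $a(t)=0$ (AoI increments) and \eqref{Eq_BlfupAoII_reset} with $a(t)=2$ (AoI resets to $1$). Equivalently, if $\theta(t)=n$ then the estimate $\hat X(t)$ is the source value sampled $n$ slots ago.

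Next I would set up the induction. For the base case $n=1$, the state is the slot immediately after action $a=2$, so \eqref{Eq_BlfupAoII_reset} gives $b_0=p,\;b_1=\bar p,\;b_i=0$ for $i\geq 2$. This matches the statement since $g(1)=\tfrac12(1+(2p-1))=p$ and $g(0)\bar p p^0=\bar p$. For the inductive step, I assume the formula holds at AoI $n$ and apply \eqref{Eq_BlfupdAoII_idle} (the only update consistent with $\theta$ advancing from $n$ to $n+1$). For $i=0$ I compute
\begin{align*}
b_0(t+1)&=g(n)p+(1-g(n))\bar p = \bar p+(2p-1)g(n)\\
&=\bar p+(2p-1)\tfrac12(1+(2p-1)^n)=\tfrac12(1+(2p-1)^{n+1})=g(n+1),
\end{align*}
where the last equality uses $\bar p+\tfrac12(2p-1)=\tfrac12$. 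For $i=1$, $b_1(t+1)=\bar p\, b_0(t)=\bar p\, g(n)=g((n+1)-1)\bar p\, p^{0}$. For $2\leq i\leq n+1$, $b_i(t+1)=p\,b_{i-1}(t)=p\cdot g(n-i+1)\bar p\, p^{i-2}=g((n+1)-i)\bar p\, p^{i-1}$. Finally, for $i\geq n+2$, $b_i(t+1)=p\,b_{i-1}(t)=0$ since $b_{n+1}(t)=0$ by the inductive hypothesis. These four cases match the claimed expression at $n+1$, closing the induction.

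The only non-trivial step is the scalar identity that collapses $\bar p+(2p-1)g(n)$ to $g(n+1)$; everything else is a direct substitution into the appropriate branch of Proposition~\ref{Prop_AoII_blf_gnr}. I do not anticipate any genuine obstacle, although one must explicitly justify the reduction $\hat X(t)=\tilde X(t)$ (and thus the suppression of $\rho$ and of $a=1$) so that the correct branches of the belief-update formula are the only ones that can be invoked along any sample path with $\theta(t)=n$.
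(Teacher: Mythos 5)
Your proof is correct, and it reaches \eqref{Eq_AoII_BD} by a genuinely different route than the paper. The paper's proof shares your preliminary reduction (under $q=1$ the belief is reset by $a=2$ via \eqref{Eq_BlfupAoII_reset} and evolved only by \eqref{Eq_BlfupdAoII_idle} otherwise, so it is a function of $\theta(t)$ alone), but then computes the distribution \emph{directly} rather than by induction: it identifies $b_0(t)$ with the $\theta(t)$-step transition probability of the source chain, and for $i\ge 1$ it decomposes the event $\{\delta(t)=i\}$ into (i) the AoII being zero at slot $t-i$, which has probability $g(n-i)$, and (ii) the AoII staying nonzero over the following $i$ slots, which by the AoII transition structure under idle actions has probability $\brp\, p^{i-1}$; the product gives the claimed formula. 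Your induction on $n$ instead verifies the closed form by pushing it through the update rules of Proposition~\ref{Prop_AoII_blf_gnr}, with the single algebraic identity $\brp+(2p-1)g(n)=g(n+1)$ doing the work for the $i=0$ entry. The trade-off: the paper's argument is shorter and gives a direct probabilistic interpretation of each $b_i$ as a statement about the AoII trajectory, while yours is more mechanical and self-contained, needing nothing beyond the already-established belief recursions (in particular it does not require re-deriving the $n$-step transition probability, which falls out of the induction). Both correctly handle the base case $\theta=1$ via the reset rule and the vanishing of $b_i$ for $i>\theta(t)$; your explicit justification that $\rho(t)\equiv 0$ and that $a=1$ is vacuous under $q=1$ is a point the paper states in the main text just before the proposition rather than inside the proof.
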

\begin{proof}
See Appendix \ref{App_AoII_blf_prfch}.
\end{proof}
We can observe from \eqref{Eq_AoII_BD},  for sufficiently large values of $\theta$, i.e., $\theta(t) \ge N$,  the belief remains almost constant. Thus, 
by using Proposition~\ref{Prop_AoII_blf_prfch} the belief MDP can be reformulated 
as a finite-state MDP with the following elements:
 \\
$\bullet$
\textit{State:}
The state at slot $t$ is $\underline{s}(t)= \left( e(t), \theta(t)\right),$   where ${e(t)\in\{0,1,\dots, E\}}$ is the battery level and ${\theta(t)\in\{1,2,\dots,N\}}$ is the AoI. 
\\
$\bullet$
\textit{Action:} 
The actions are $a(t)=\{0,2\}$, defined in Sec.~\ref{Sec_SM}. 
\\
$\bullet$
\textit{State Transition Probabilities:} 
The  transition probabilities from  current state ${\underline{s} =(e,\theta)}$ to  next state ${\underline{s}'=(e',\theta')}$ under a given action $a$ is defined by  
$
{
        \Pr\{\underline{s}'\,|\,\underline{s}, a\} }, 
  $  
which can be written as $ {\Pr\{\underline{s}'\,|\,\underline{s}, a\} = \Pr\{e'\,|\,e,\theta,a\}  \Pr\{\theta'\,|\,\theta,\,a\} } $, where  
 $\Pr\{e'\,|\,e,\theta,a\}$ is given as 
\begin{equation}
\hspace{-1 em}
     \begin{array}{ll}
     &    \Pr\{e'\,|\,e,\theta,\,a\}  =
        \\&
        \left\{ 
  \begin{array}{ll}
  \mu,   & \text{if} ~ a = 0,~ e' = \min\{e+1, E\},
  \\
    \brmu,   & \text{if} ~ a = 0,~ e' = e,
  \\
  \mu(1-g(\theta)),   & \text{if} ~ a = 1,~ e' = e+1-c,
  \\
  \brmu (1-g(\theta)),   & \text{if} ~ a = 1,~ e' = e-c,
    \\
  \mu g(\theta),   & \text{if} ~ a = 1,~ e' = e+1-\cs,
  \\
 \brmu g(\theta),   & \text{if} ~ a = 1,~ e' = e-\cs,
  \\
     0,  & \text{otherwise},
    \end{array}
    \right.
     \end{array}
 \end{equation}
 and 
 $\Pr\{ \theta'\,|\,\theta,a \}$ is given by \eqref{Eq_AoIDyn}.
 \\
 $\bullet$
 \textit{Cost Function:}
 The immediate cost function at slot $t$ is the expected AoII  given by
 \begin{equation}
 \begin{array}{cc}
 \textstyle
   C(\underline{s}(t)) = \sum_{i=0}^{\theta(t)} b_i(t)i,
      \end{array}
 \end{equation}
where $b_i(t)$ is given by \eqref{Eq_AoII_BD}.

  Proposition \ref{Prop_ComMDP} holds true also for the MDP described above, showing that the MDP is communicating. Hence, an optimal policy is independent of the initial state and it can be found using the same RVI algorithm as in \eqref{Eq_RVIal}.
 \subsubsection{A Deep Learning-based Policy for $q<1$}\label{Sec_DL}
 Here the aim is to solve MDP problem~\eqref{Eq_Prob_AoIIMDP} for the general case of unreliable channel. However, the main difficulty comes from the fact that the state space is infinite. Thus,  methods such as RVI and linear programming \cite{Zakeri_CL}, which are only applicable for problems with a \textit{finite} state space, cannot be directly utilized. 
 Nonetheless, problem~\eqref{Eq_Prob_AoIIMDP} is an MDP problem and can be solved via reinforcement learning algorithms that use approximation methods to approximate either the Q-function or optimal policy directly.
One of the most popular algorithms is deep Q-network (DQN) \cite{Deep_Learning_Nature} which uses a deep neural network to approximate the optimal Q-function, and hence, an optimal policy.  We also adopt DQN to solve problem~\eqref{Eq_Prob_AoIIMDP}, where the implementation details are presented in Sec.~\ref{Sec_performance_com}.
\begin{remark}
\blue{To practically apply DQN, its training should be done without relying on energy harvesting. The trained module could then be deployed in the system, where its energy supply is also provided by energy harvesting.
However, investigating this aspect goes beyond the scope of this paper and could be an avenue for future study, focusing on adapting and customizing deep learning algorithms for energy-harvesting devices \cite{dl_on_EH_access}.
    }
\end{remark}
\section{Numerical Results}\label{Sec_NumRes}
In this section, we provide simulation results to show the performance of the derived policies for both considered metrics, distortion and AoII. 
Unless specified otherwise, the sampling cost $\cs$ and transmission cost $\ct$ are $1$, the battery capacity $E$ is $10$, and the AoI bound $N$ is $30$. \blue{Moreover, the RVI algorithm parameters are set as $\epsilon=10^{-3}$ and $\underline{z}_{\mathrm{ref}}=1$, i.e., the first state in the state space.} 
\\\indent We first provide metric-specific results in Sec.~\ref{Sec_metrics_com} and then performance comparisons in Sec.~\ref{Sec_performance_com}.
\subsection{Metric-specific Results}\label{Sec_metrics_com}
\begin{figure}[t!]
\centering
\subfigure[ $\tilde{X}=1$ and $\hat{X}=0$ ] 
{
\includegraphics[width=0.4\textwidth]{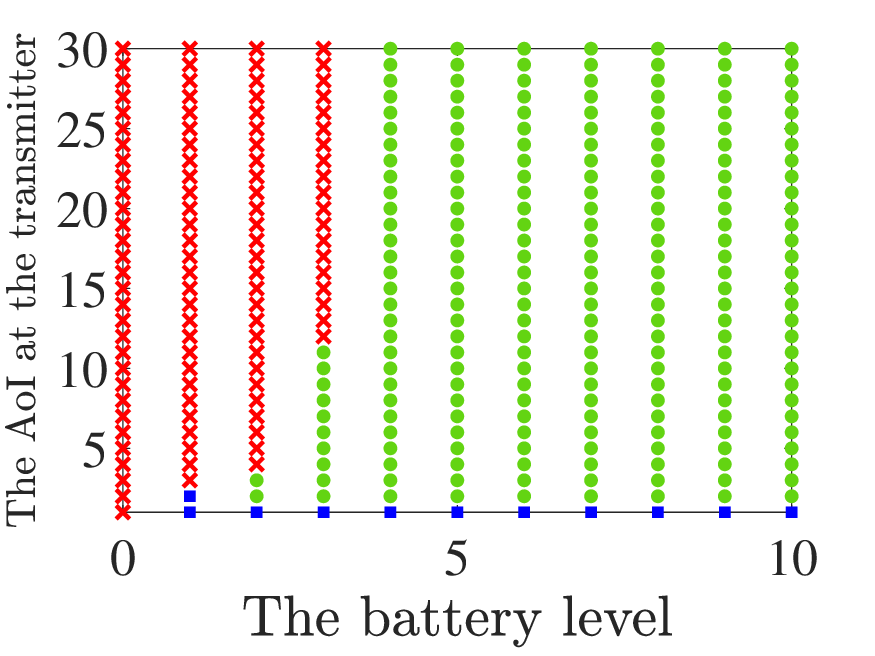}
\label{Fig_dis_sts1}
}
\subfigure[ $\tilde{X}=0$ and $\hat{X}=0$  ]{
\includegraphics[width=0.4\textwidth]{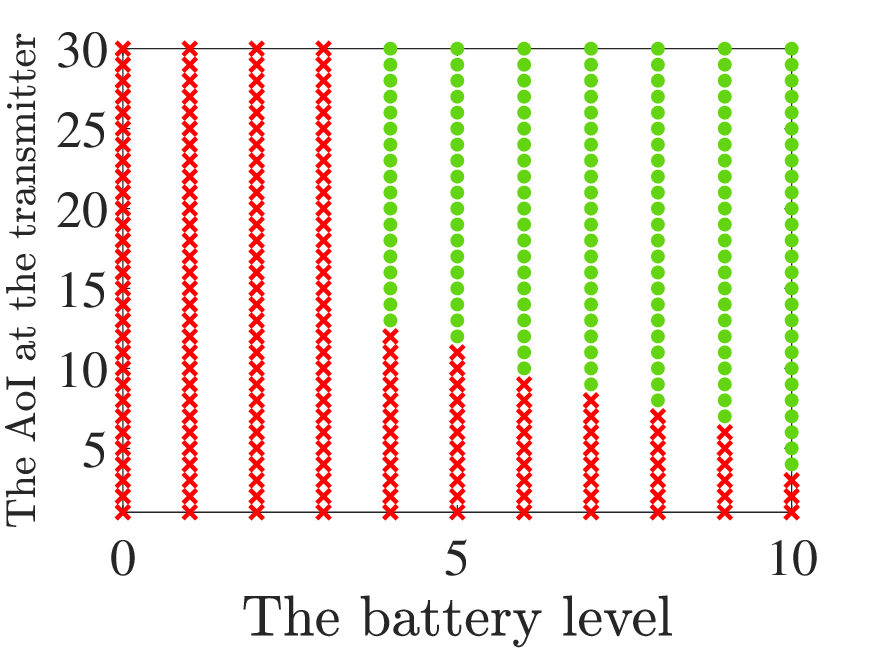}
\label{Fig_dis_sts2}
}
\caption{The switching-type structure of a real-time error-optimal policy for $ E = 10$, $ N = 30 $, $p=0.8$, $q=0.5$, and $\mu = 0.2$.
Cross: the idle action, i.e., $a(t)=0$;
Square: the re-transmission action, i.e., $a(t)=1$;
Circle: the sampling and transmission action, i.e., $a(t)=2$.
}
\label{Fig_dis_sts}
\end{figure}
\begin{figure}[t!]
\centering
\subfigure[ $\tilde{X}=1$ and $\hat{X}=0$ ] 
{
\includegraphics[width=0.35\textwidth]{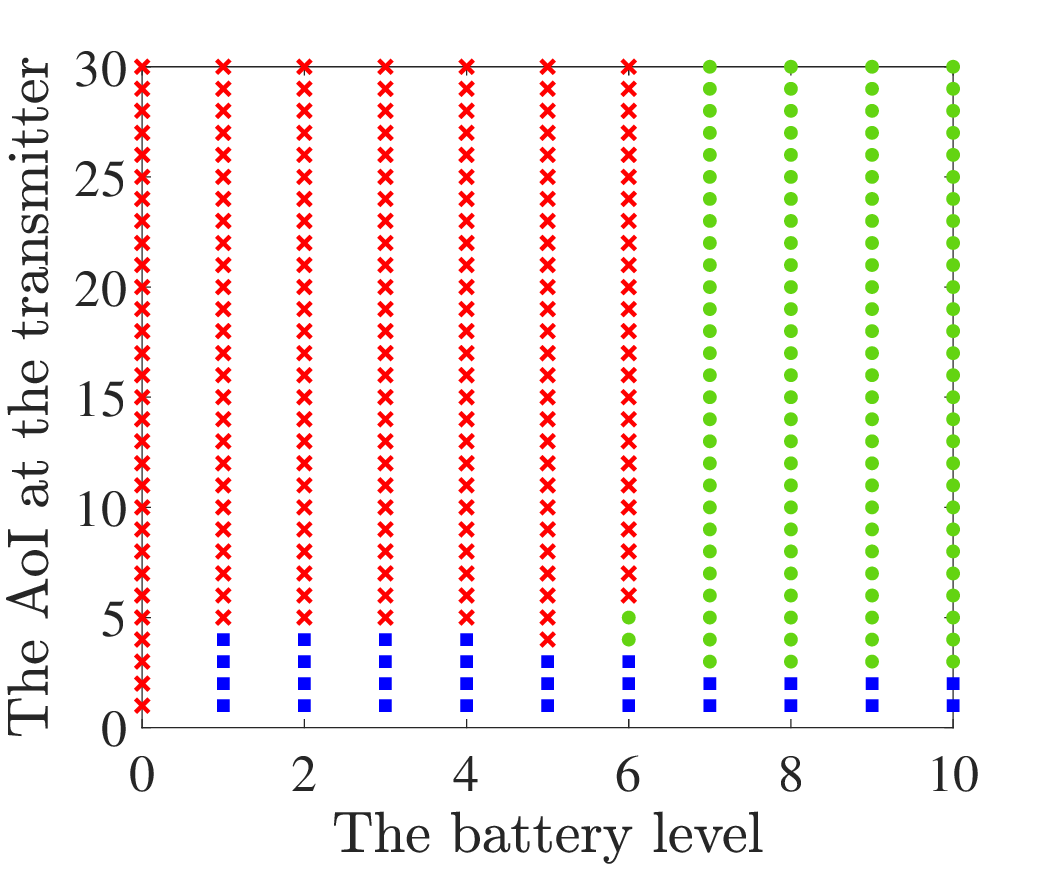}
\label{Fig_dis_sts1_cs}
}
\subfigure[ $\tilde{X}=0$ and $\hat{X}=0$  ]{
\includegraphics[width=0.35\textwidth]{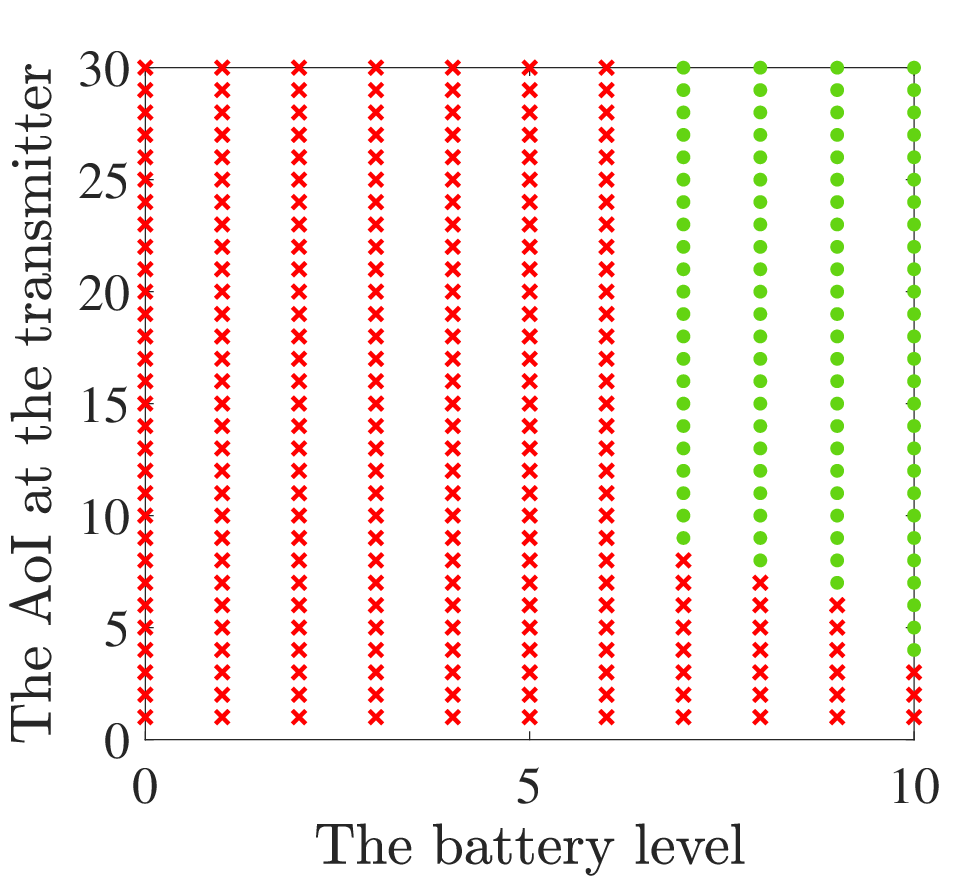}
\label{Fig_dis_sts2_cs}
}
\caption{\blue{The switching-type structure of a real-time error-optimal policy for $ E = 10$, $ N = 30 $, $p=0.8$, $q=0.5$, and $\mu = 0.8$. Moreover, 
    the sampling cost is $\cs=5$ and transmission cost $\ct$ is $1$.
Cross: the idle action, i.e., $a(t)=0$;
Square: the re-transmission action, i.e., $a(t)=1$;
Circle: the sampling and transmission action, i.e., $a(t)=2$.}
}
\label{Fig_dis_sts_cs}
\end{figure}

\subsubsection{Distortion}\label{Sec_SR_dis}
Here we provide simulation results for the distortion metric where we assume that the distortion is the real-time error, i.e., ${ d(t)=\mathds{1}_{\{X(t) \neq \hat{X}(t)\}} }$ \cite{Kam_2018}. 
For a performance comparison, we consider a {``baseline policy"} with the  following decision  rule:  
    If $e(t) \ge \ct+\cs$, then ${a(t) = 2}$; otherwise $a(t)=0$.
    That is, whenever there is enough energy the the sampler takes a sample and the transmitter transmits that sample if $X(t) \neq \hx$.
 Notice that the sample-at-change policy introduced in \cite{Kam_2018}, is not applicable here, as the source is not fully observable.

Fig.~\ref{Fig_dis_sts} and Fig.~\ref{Fig_dis_sts_cs}  show a real-time error-optimal policy for different values of battery level and the AoI at the transmitter (the AoI, for short). The figures reveal an interesting non-monotonic  switching-type structure of the optimal policy with respect to the battery level and the AoI. 
{Fig.~\ref{Fig_dis_sts1}} shows that if the policy takes the idle action at state $(e, \theta, 1,0)$, it takes the same action at state $(e, \theta', 1,0)$, where $\theta' = \theta + k, \forall\, k \in \Bbb{N}$.
In contrast, Fig. \ref{Fig_dis_sts2} shows that if the policy takes the sampling and transmission action at state $(e, \theta, 0,0)$, it takes the same action at state $(e, \theta', 0,0)$, where $\theta' = \theta + k, \forall\, k \in \Bbb{N}$.
An interesting observation from Fig.~\ref{Fig_dis_sts1} and Fig.~\ref{Fig_dis_sts2} is that when $\tx=\hx$ the policy tends to choose the idle action for large values of the AoI, whereas 
when $\tx \neq \hx$, the policy tends to choose the sample and transmission action for large values of the AoI. 
The reason for this observation is that the underlying cost function, which is minimized by the policy, is a  \textit{non-monotonic} function of the AoI. More precisely, for 
$\tx=\hx$, it is a non-increasing function of the AoI and for  $\tx \neq \hx$, it is a non-decreasing function of the AoI.
\blue{Furthermore, Fig.~\ref{Fig_dis_sts_cs} visualizes the structure of the optimal policy when the sampling cost is higher than the transmission cost, i.e., $\cs=5\ct$. As shown in Fig.~\ref{Fig_dis_sts1_cs}, the policy tends to increase the number of retransmissions, i.e., the blue squares, as expected, since taking a new sample incurs a higher cost than that of retransmitting the current sample.
}

\begin{figure*}[t]
\centering
\subfigure[The average real-time error vs. the self-transition probability for $\mu=0.5$, $q=0.8$, and $E=5$]
{
\includegraphics[width=0.4\textwidth]{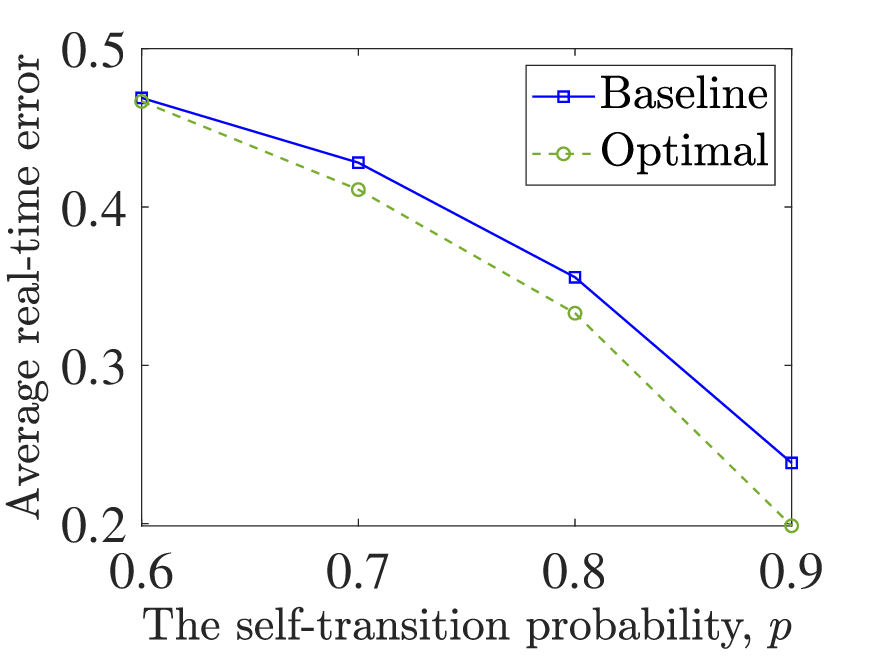}
\label{Fig_RtE_p}
}
\subfigure[The average real-time error vs. the energy arrival rate for $p=0.8$,  $q=0.7$,  and $E=5$ ]{
\includegraphics[width=0.4\textwidth]{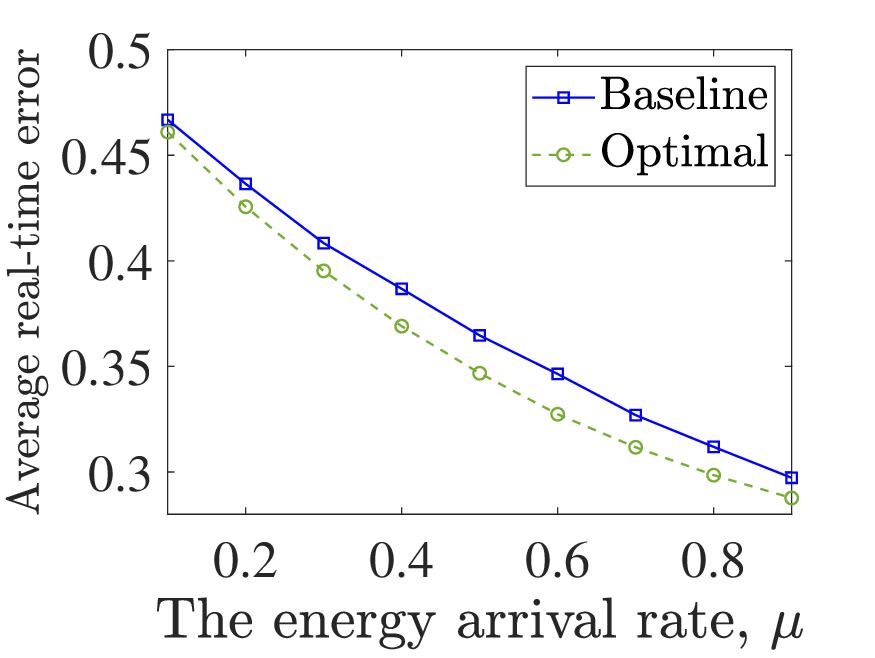}
\label{Fig_RtE_mu}
}
\subfigure[ The average real-time error vs. the channel reliability for $p=0.7$,  $\mu=0.5$,  and $E=5$ ] 
{
\includegraphics[width=0.4\textwidth]{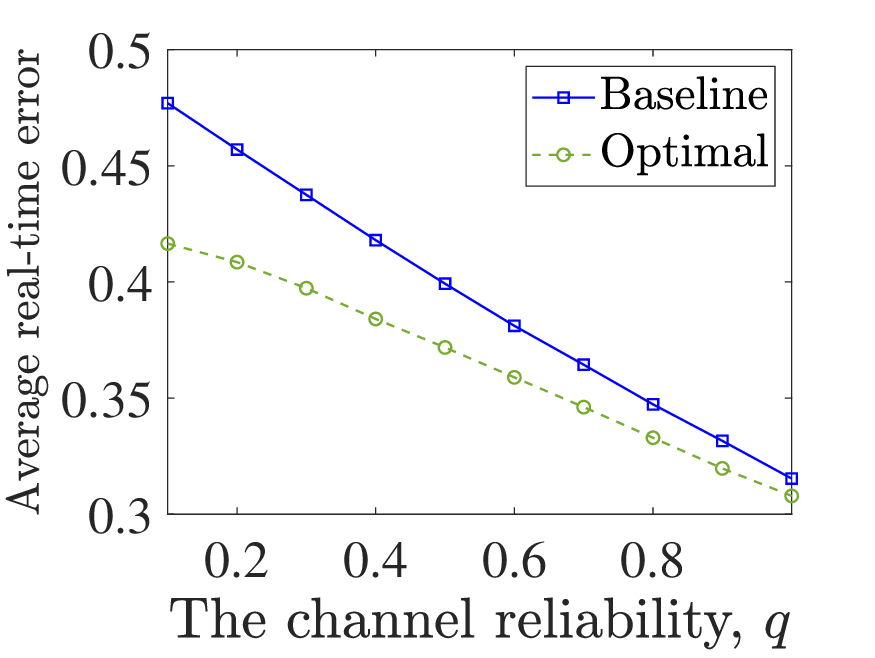}
\label{Fig_RtE_q}
}
\subfigure[The average real-time error vs. the battery capacity for $p=0.8$,  $q=0.7$,  and $\mu=0.5$]{
\includegraphics[width=0.4\textwidth]{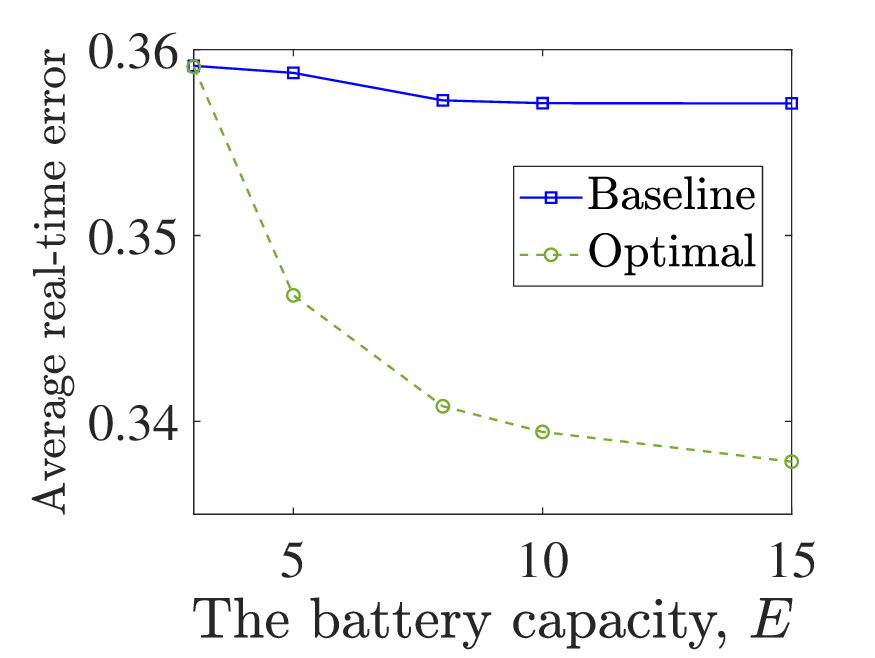}
\label{Fig_RtE_E}
}
\caption{Performance of a real-time error-optimal policy and  the baseline policy for different parameters, where $N = 30$.}
\label{Fig_RtE_pmuqE}
\end{figure*}
\begin{figure}[]
    \centering
    \hspace{-1 em}
    \includegraphics[width=.4\textwidth]{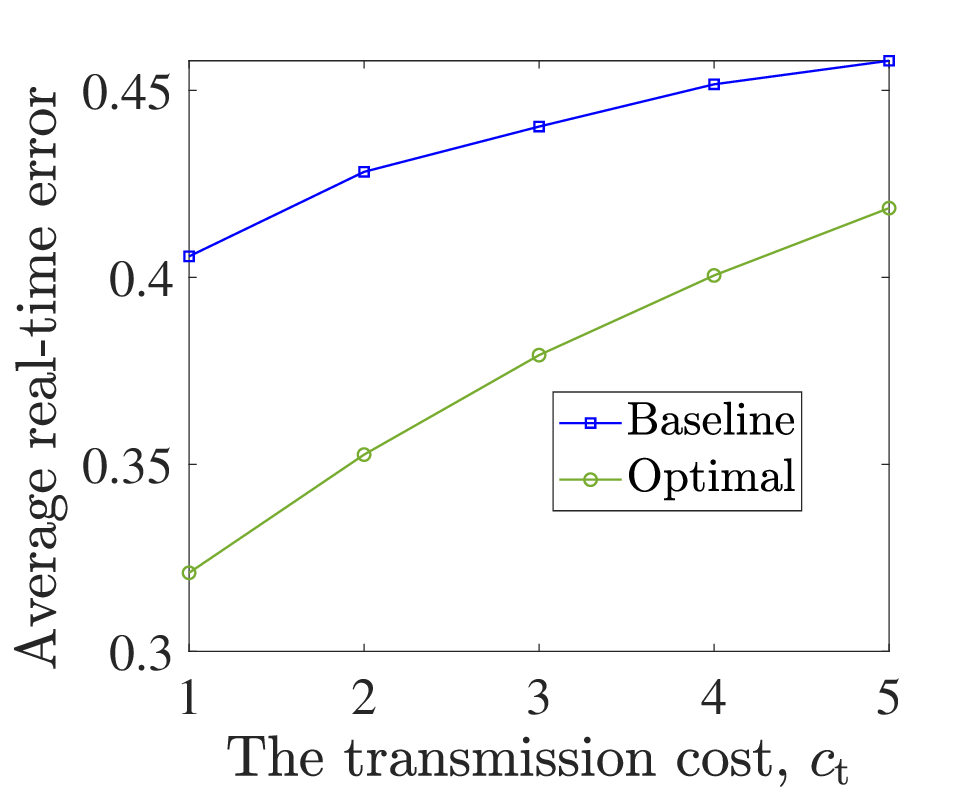}    
     \caption{\blue{The average real-time error vs. the transmission cost, where 
${q=0.7}$, $p=0.8$, $\mu=0.6$, $E=10$, and $c_{\mathrm{s}}=1$.}
    }
    \label{Fig_rte_trncost}
\end{figure} 
\begin{figure}[h!]
\centering
\subfigure[ $\mu=0.1$  ] 
{
\includegraphics[width=0.35\textwidth]{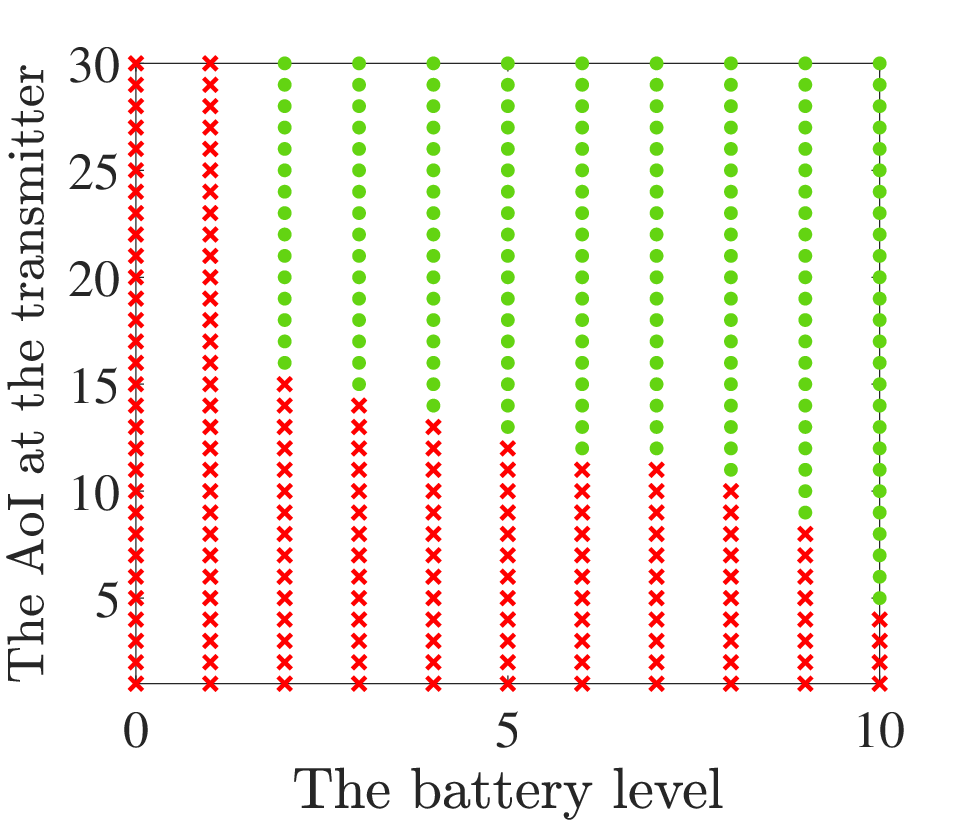}
\label{Fig_AoII_sts1}
}
\subfigure[ $\mu=0.5$  ]{
\includegraphics[width=0.35\textwidth]{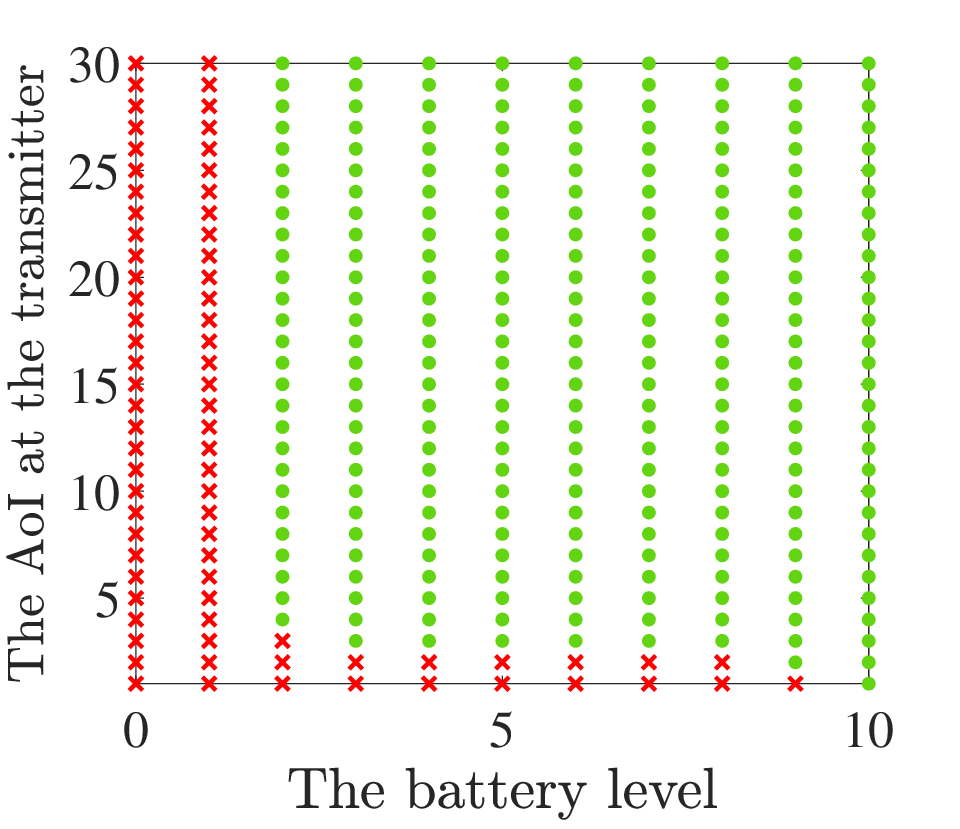}
\label{Fig_AoII_sts2}
}
\caption{The switching-type structure of an AoII-optimal policy for $ E = 10$, $N = 30$, $p=0.7$, and $q=1$.
Cross: the idle action, i.e., $a(t)=0$;
Circle: the sampling and transmission action, i.e., $a(t)=2$.
}
\label{Fig_AoII_sts}
\end{figure}

Fig. \ref{Fig_RtE_p} depicts the performance of a real-time error-optimal policy and the baseline policy as a function of the self-transition probability $p$ of the source. 
The figure shows that the real-time error reduces as the self-transition probability increases.
This is because large values of $p$ mean that the source does not change fast, and thus, by sampling and transmission at appropriate times, the monitor's estimate about the source will be accurate most of the time. 

Fig. \ref{Fig_RtE_mu} shows the impact of the energy arrival rate on the real-time error performance of the optimal and the baseline policies. It can be seen that an increase in the energy arrival rate reduces the real-time error, as expected, because of the availability of energy for more sampling and transmissions.

Fig. \ref{Fig_RtE_q} depicts the real-time error performance of the optimal and the baseline policies as functions of the channel reliability $q$. The figure shows a significantly better performance achievement by the optimal policy compared to the baseline policy for a low reliable channel setup.  The reason is
that at a low reliable channel setup,
finding optimal times of sampling and transmission become more critical. 
Moreover, 
the figure shows that the reliability of the channel directly influences the performance.

 Fig. \ref{Fig_RtE_E} shows the performance of the optimal and the baseline policy as functions of the battery capacity. The figure highlights that by increasing the battery capacity the performance gap between the two policies increases. This is because the baseline policy does not save energy for the future, it just opportunistically uses the available energy. 

\blue{
Finally, we examine the impact of the transmission cost, shown in Fig.~\ref{Fig_rte_trncost}, on average real-time error performance. The figure shows that performance can be significantly improved by the derived policies compared to the baseline policy, especially when the transmission cost is low. The reason for this observation is that with a small transmission cost, as might be the case in short-packet IoT applications, there is a greater opportunity to offset that cost with harvested energy and use it effectively for transmissions. In contrast, with high transmission costs, such opportunities are limited regardless of the transmission strategies employed.
}
\subsubsection{The Age of Incorrect Information}
Here we first show the structure of an AoII-optimal policy for the perfect channel setup, i.e., $q=1$. Then, the impact of system parameters on the average AoII performance along with performance comparisons is shown in the next section. 

Fig. \ref{Fig_AoII_sts} visualizes an AoII-optimal policy for different possible values of energy level
and the AoI.
The figure shows a switching-type structure of the optimal policy with respect to the battery level and the AoI. This is because if the policy takes action $a=2$ at a certain battery level $e$ and the AoI $\theta$, it will take the same action at any state where the energy level exceeds $e$ but the AoI remains constant. Likewise, this behavior applies to the fixed battery level as the AoI increases.  
Comparing Fig. \ref{Fig_AoII_sts1} and Fig. \ref{Fig_AoII_sts2}, it is evident that at lower energy arrival rates, the policy more frequently takes the idle action, as expected.
 \subsection{Performance Comparisons}\label{Sec_performance_com}
Here, we first examine the impact of the truncation of the belief space, i.e., bounding the AoI $\theta$, on the performance of the derived policies for the belief MDP problems.  
We then present average AoII performance comparisons among different policies under various parameters.
Additionally, for the performance benchmarking, we consider both an AoI-optimal policy and the baseline policy specified in Section~\ref{Sec_SR_dis}. 

In Fig. \ref{Fig_N}, the average values of the considered metrics, i.e.,  distortion and AoII, are graphed based on the AoI bound.
It is evident that when the value of the AoI bound is a sufficiently large number, which varies with the metric under consideration and system parameters, further increasing it does not affect the performance.
This demonstrates that the AoI bound $N$, i.e., the truncation of the belief space, does not alter the optimality of the derived policies.
\begin{figure}[t!]
\centering
\subfigure[The average real-time error vs. the AoI bound $N$ for $p=0.7$, $\mu=0.5$, $q=0.6$ and $E=10$ ]{
\includegraphics[width=0.4\textwidth]{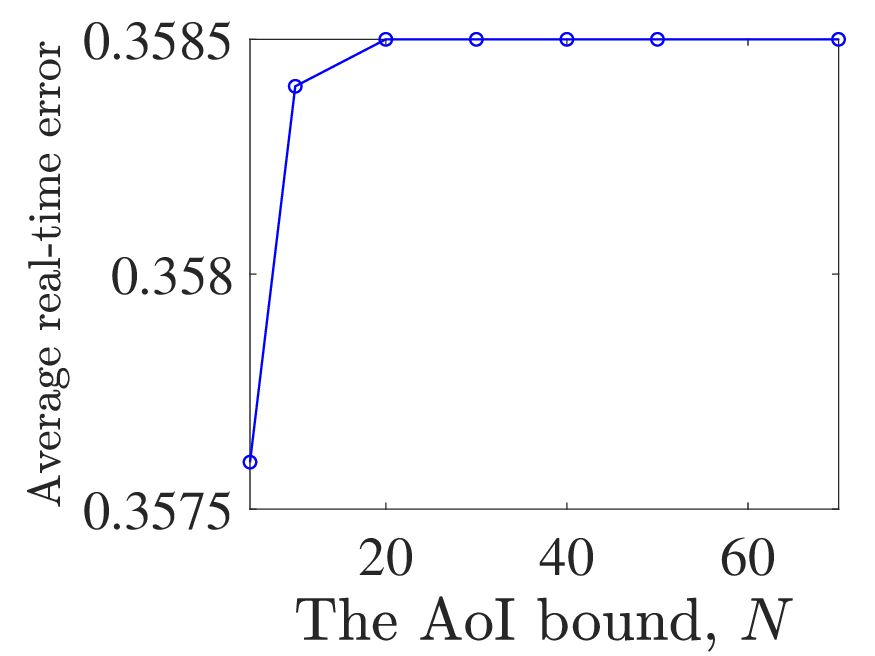}
\label{Fig_N_mse}
}
\subfigure[The average AoII vs. the AoI bound  $ N $ for $p=0.7$, $\mu=0.3$, $q=1$,  and $E=10$ ]{
\includegraphics[width=0.4\textwidth]{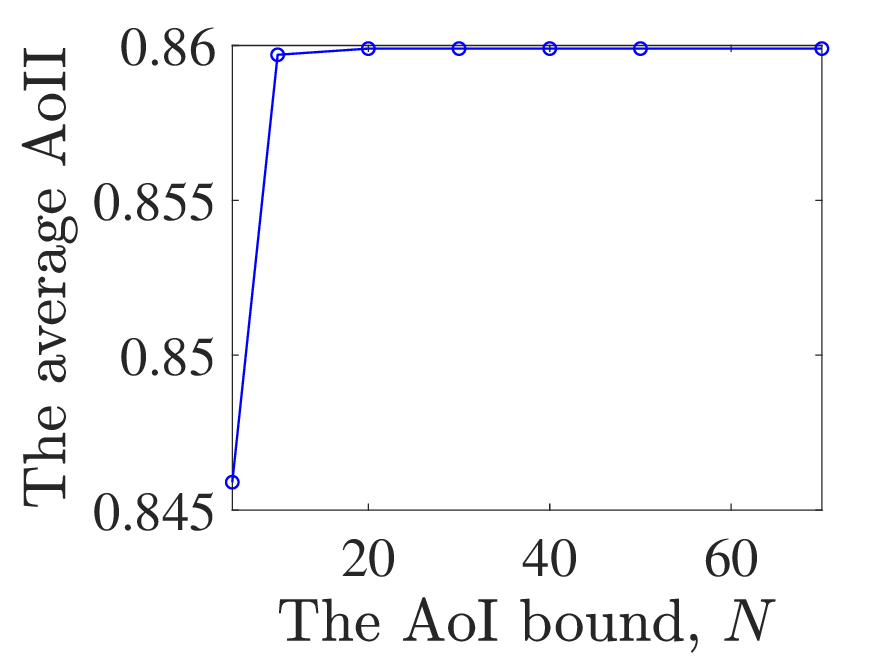}
\label{Fig_N_AoII}
}
\caption{Impact of the truncating the belief space, i.e., bounding the AoI, on the performance metrics }
\label{Fig_N}
\end{figure}

The average AoII performance of different policies is shown as a function of the self-transition probability of the source in Fig.~\ref{Fig_AoII_p} and the energy arrival rate in Fig.~\ref{Fig_AoII_mu}. Each policy is first optimized for the corresponding metric, and then its average AoII performance is calculated empirically and plotted. 
First and foremost, the figures demonstrate that the average AoII performance of the real-time error-optimal policy coincides with that of the AoII-optimal policies. Furthermore, it can be seen that the AoII-optimal policy exhibits a significant performance improvement compared to both the baseline policy and the AoI-optimal policy. 
This highlights the significance of considering the semantics of sampling and transmissions when optimizing the real-time tracking of a remote source, which is typically the primary goal in most status update systems.

 Figure \ref{Fig_AoII_q} demonstrates the average AoII as a function of the channel reliability $q$, where we use a deep learning policy for the AoII optimization problem.
 \blue{First, for the deep learning policy,  we consider a fully connected
deep neural network consisting of an input layer (${|z(t)|= N+3}$ neurons), $2$ hidden layers consisting of $64$ and $32$ neurons with \textit{ReLU} activation function, and an output layer (${|\mathcal{A}|=3}$ neurons); moreover, the number of steps per episode is $400$, the discount factor is $0.99$, the mini-batch size is $64$, the learning-rate is $0.0001$, and  the optimizer is \textit{RMSProp}.} 
 The figure shows that the real-time error-optimal policy outperforms other policies in terms of the average AoII. Moreover, when the channel reliability is higher,  the deep learning policy demonstrates a comparable performance to the real-time error-optimal policy. 
 
 Figure \ref{Fig_AoII_cs} shows the average AoII with respect to the sampling cost while the transmission cost is fixed.
 The figure reveals that both the real-time error-optimal policy and the deep learning policy coincide with the AoII-optimal policy. However, there exists a considerable performance gap between the AoI-optimal policy and the AoII-optimal policy when the sampling cost is small. 
 The reason is that small sampling costs create more opportunities for sampling and transmission, and the AoII-optimal policy utilizes these chances more efficiently than the AoI-optimal policy.

It is noteworthy that from Fig. \ref{Fig_AoII_allqcs} one can observe that
the real-time error-optimal policy also minimizes the average AoII. This observation aligns with findings reported in \cite{Kam_2018} for the sample-at-change policy. However, our results extend beyond those of \cite{Kam_2018} because: 1) the sample-at-change policy is not applicable here due to the partial observability of the source, and 2) our study includes energy harvesting. 
\begin{figure*}[h!]
\centering
\subfigure[ The average AoII vs. the self-transition probability for $\mu=0.5$ and $q=1$ ] 
{
\includegraphics[width=0.42\textwidth]{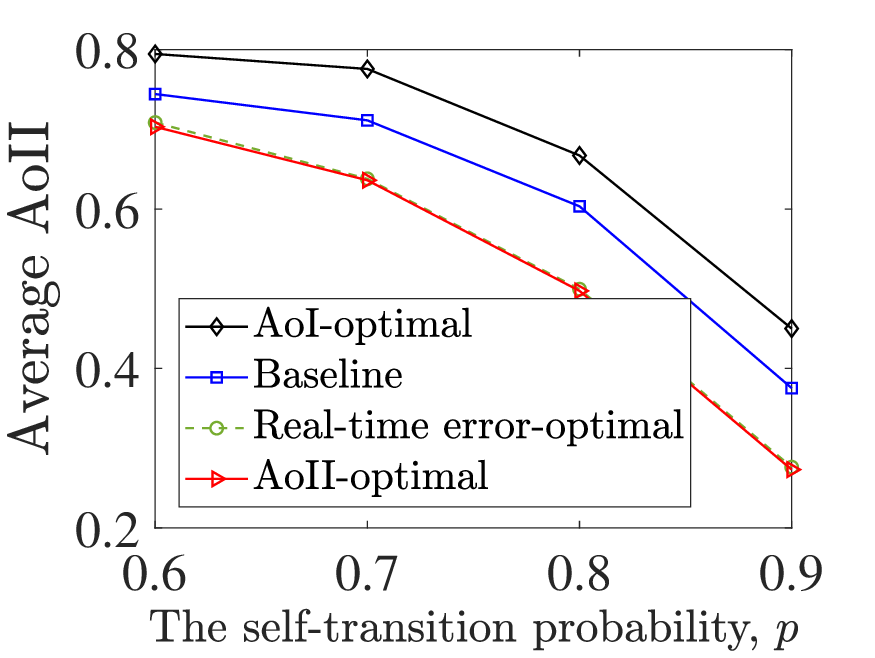}
\label{Fig_AoII_p}
}
\subfigure[ The average AoII vs. the energy arrival rate for $ p = 0.7$ and $q=1$]{
\includegraphics[width=0.42\textwidth]{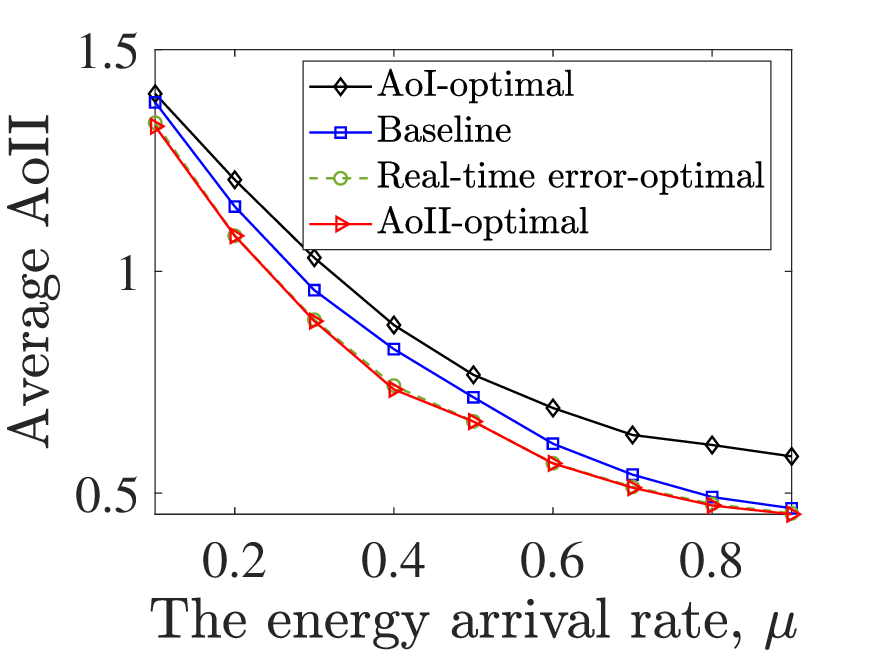}
\label{Fig_AoII_mu}
}
\subfigure[ The average AoII vs. the channel reliability for $\mu=0.5$ and $p=0.7$ ] 
{
\includegraphics[width=0.42\textwidth]{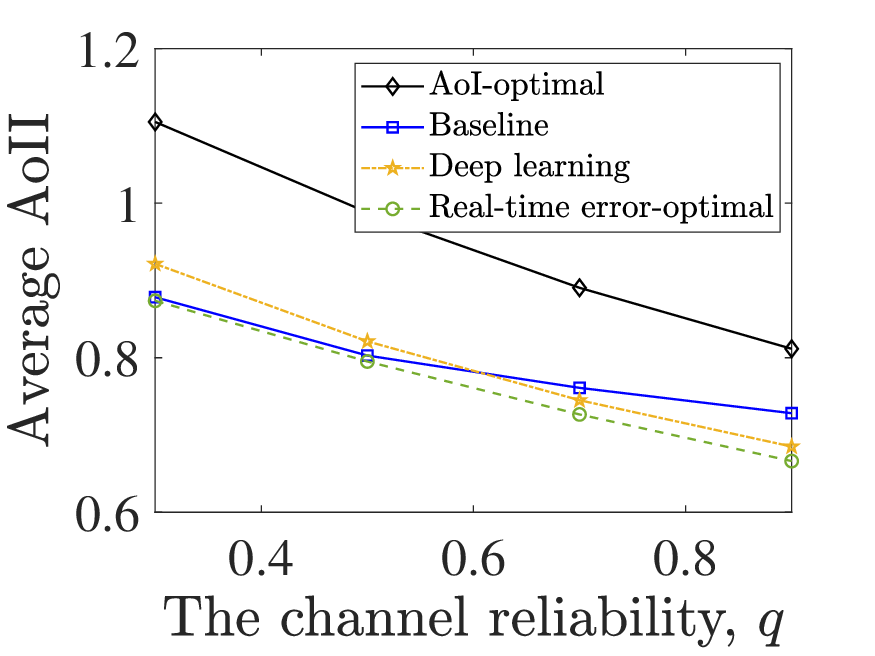}
\label{Fig_AoII_q}
}
\subfigure[ The average AoII vs. the sampling cost for $\mu=0.7$, $p=0.7$, and $q=1$ ]{
\includegraphics[width=0.42\textwidth]{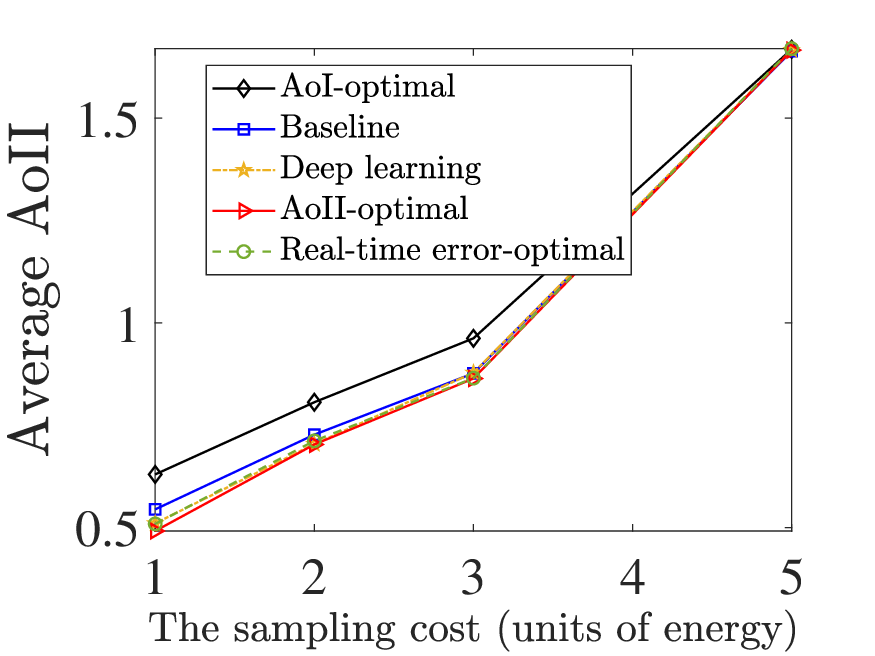}
\label{Fig_AoII_cs}
}
\caption{ The average AoII performance of the different policies, where $E=5$ }
\label{Fig_AoII_allqcs}
\end{figure*}
\begin{figure}[h!]
    \centering
\includegraphics[width=.45\textwidth]{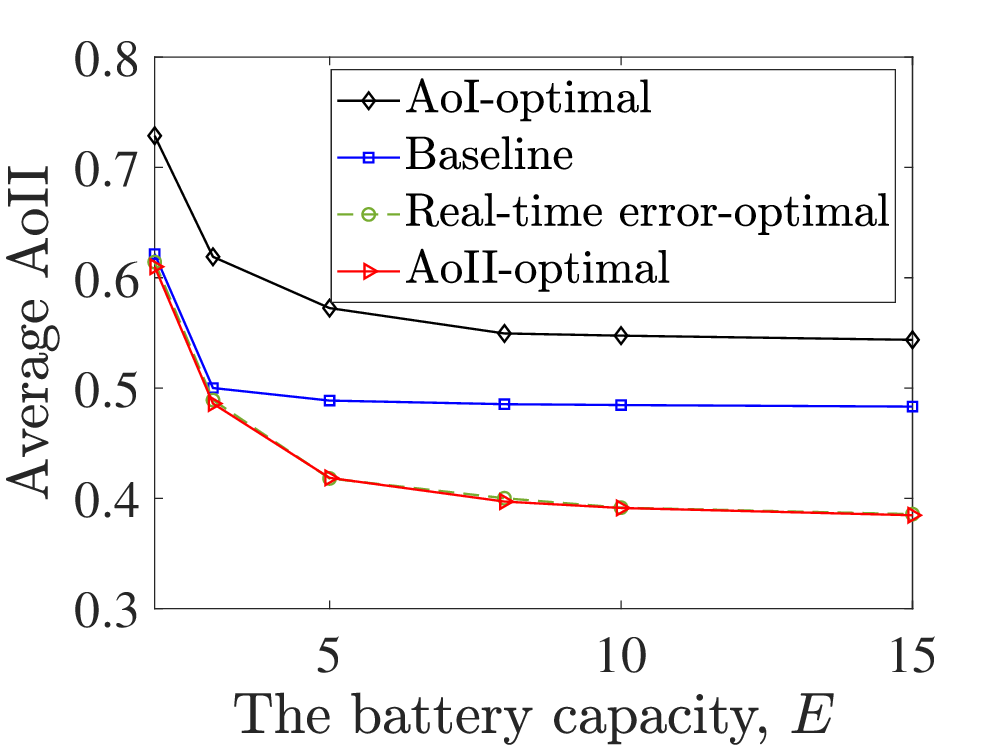}   
    \caption{The average AoII vs. the battery capacity for different policies, where 
${q=1}$, $p=0.8$, $\mu=0.6$, $c_{\mathrm{t}}=1$, and $c_{\mathrm{s}}=1$
    }
    \label{Fig_aoii_batrycap}
\end{figure}

\blue{We plot the average AoII as a function of the battery capacity $E$ for different policies in Fig.~\ref{Fig_aoii_batrycap}. The first observation is that the average AoII can be considerably improved by the derived policies compared to both the AoI-optimal and the baseline policy. Moreover, the figure shows that beyond a certain point, the performance becomes less sensitive to the values of the battery capacity. For example, after $E=3$, the performance of the baseline policy remains almost constant. One can observe that the baseline policy is less sensitive to battery capacities compared to the other policies.
}

Fig.~\ref{Fig_numstate} examines the impact of the number of states of the symmetric multi-state source on two different distortion metrics, i.e., average MSE and average real-time error. 
 The graph depicts a notable disparity in performance between the derived optimal policy and the baseline as well as the AoI-optimal policies. Notably, it becomes evident that the AoI-optimal policy exhibits inefficiency when the objective is the real-time tracking of a source. This inefficiency arises because
 the AoI-optimal policy is agnostic to the value of the information source and its dynamic. 

\blue{Finally, in Fig.~\ref{Fig_mse_asym_mu}, we plot average MSE as a function of the energy arrival rate \(\mu\) for an asymmetric Markov source with the transition probability given by 
\begin{equation}\label{Eq_tp_asy}
 P =
      \begin{bmatrix}
     0.1& 0.6 & 0.3 \\ 
     0.4 & 0&  0.6\\ 
     0.8& 0.1 & 0.1 
 \end{bmatrix}.
 \end{equation}
The figure shows a positive impact of the energy arrival rate on performance. Moreover, it shows that the performance improvement achieved by the derived policy increases as the arrival rate grows. However, the figure shows a slightly different trend compared to Fig.~\ref{Fig_RtE_mu}, i.e., in contrast to the results in Fig.~\ref{Fig_RtE_mu}, the performance gap between the baseline and optimal policy in Fig. \ref{Fig_mse_asym_mu} continues to increase as the energy arrival rate increases. This is because larger energy arrival rates provide more opportunities to properly update the monitor about the source, which is crucial depending on the Markov chain dynamics and its number of states.
}

\begin{figure}[h!]
\centering
\subfigure[ 
The average MSE vs. the number of states of the source, where   
${q=0.5}$, $\mu=0.8$, and $E=5$ 
] 
{
\includegraphics[width=0.43\textwidth]{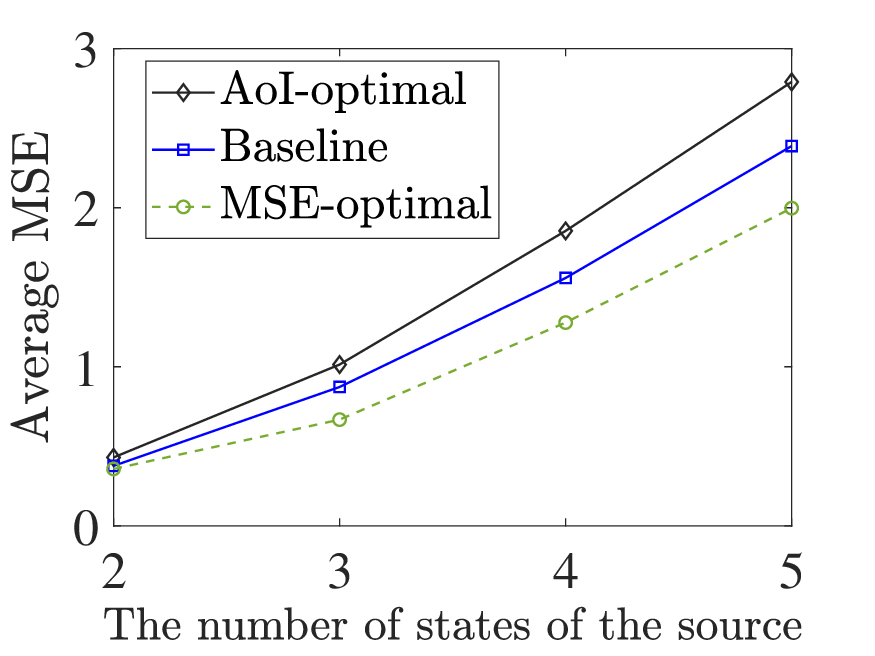}
\label{Fig_mse_numste}
}
\subfigure[ 
The average real-time error vs. the number of states of the source, where 
$q=0.9$, $\mu=0.3$, and $E=3$    ] 
{
\includegraphics[width=0.47\textwidth]{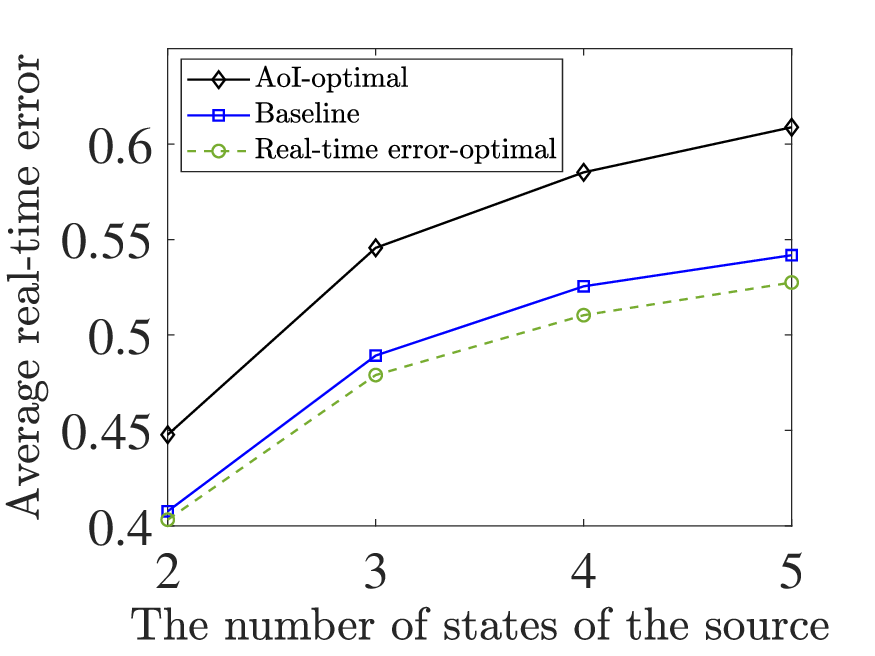}
\label{Fig_RTE_numste}
}
\caption{Impact of number of states of the symmetric source with self-transition probability $p=0.8$ on different distortion performances}
\label{Fig_numstate}
\end{figure}

\begin{figure}[]
    \centering
    \hspace{-1 em}
    \includegraphics[width=.43\textwidth]{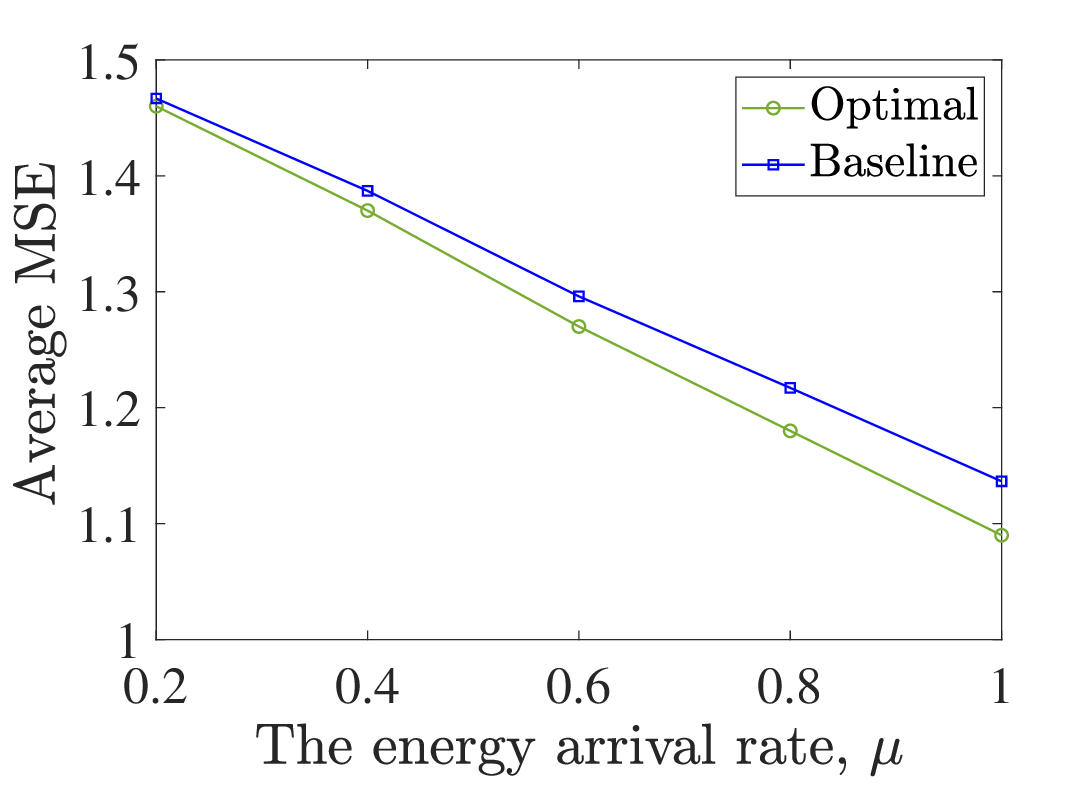}    
    \caption{\blue{The average MSE vs. the energy arrival rate $\mu$ for three-state asymmetric source with the transition probability matrix \eqref{Eq_tp_asy}, where $q=0.7$ and $E=5$ }}
    \label{Fig_mse_asym_mu}
\end{figure} 

\section{Conclusions }\label{Sec_CR_FW} %
\blue{We addressed real-time tracking of a partially observable Markov source in an energy-harvesting system. We considered sampling and transmission costs, where the sampling cost made the source partially observable. We provided joint sampling and transmission policies for two metrics: distortion and AoII. To this end, a POMDP approach with its belief MDP formulation was proposed. Particularly, for both AoII under a perfect channel and distortion, we expressed the belief as a function of AoI, enabling us to  cast and solve the finite-state MDP using RVI. For the general AoII setup, we proposed a deep reinforcement learning policy.}

\blue{The simulation results showed that optimizing the corresponding metrics significantly improves system performance in average real-time error or AoII, compared to using an AoI-optimal policy or a (baseline) policy that opportunistically performs sampling and transmission. Surprisingly, our findings revealed that minimizing the average real-time error also minimizes the average AoII, highlighting an intriguing interplay between these metrics. Finally, the results underscore the importance of considering the dynamics of the information source and the objectives of sampling and transmission in the real-time tracking systems.}


\appendix 
\subsection{Proof of Proposition  \ref{Prop_dis_blf_age}}\label{App_dist_blf}
 As the source is Markov, the necessary information required for calculating the belief from the complete information $I_{\mathrm{C}}(t)$ are: the last state of the source sampled, i.e., $\tx$, and how long ago that sample has been taken which is the age of that sample, i.e., the AoI at the transmitter. Thus, we have 
    \begin{align}
    \hspace{-1 em}
        \nonumber
          &   b(t)  = \Pr\{X(t)=1\,\big|\,I_{\mathrm{C}}(t)\}=
    \Pr\{X(t)=1\,\big|\,\theta(t), \tx\}
    \\ &
    = \Pr\{X(t)=1\,\big|\,X(t-\theta(t)) = \tx\},
    \end{align}
    which equals the $\theta(t)-$step transition probability of the (source) Markov chain. For the Markov chain, it follows that (see, e.g., \cite{Kam_Towards_eff_2018})
    $\Pr\{X(t)=1\,\big|\,X(t-\theta(t)) = 1\}= 0.5\left(1+(2p-1)^{\theta(t)} \right)$, and 
    $\Pr\{X(t)=1\,\big|\,X(t-\theta(t)) = 0\}= 1- 0.5\left(1+(2p-1)^{\theta(t)} \right)$, which completes proof. 
\subsection{Proof of Proposition \ref{Prop_ComMDP}}
\label{App_ComMDP}
To show the MDP is communicating, it is sufficient to find a randomized policy that induces a recurrent Markov chain, i.e., 
a policy under which any state of the arbitrary pair of states ${\underline{z}=(e, \theta, \tilde{X}, \hat{X} )}$ and ${ \underline{z}'=(e', \theta', \tilde{X}', \hat{X}')}$ in $\underline{\mathcal{Z}}$ is accessible from the other one \cite[Prop. 8.3.1]{Puterman_Book}. 
We define the following policy: the policy takes the idle action $a =0$ with probability $1$ at any state where
${ e < c }$, recall that $c =\ct + \cs$,
and in all the other states,  randomizes between the idle action $a=0$ and the sample and transmit action $a=2$ with probability $0.5$.
We first note that according to the source dynamic,  there is a positive probability that the source state changes, hence, $\tilde{X}$ and $\hat{X}$ become respectively $\tilde{X}'$ and $ \hat{X}'$  whenever action is $a=2$. 
We consider two cases: $e' \ge e$ and $e' < e$. 
For the case where $e' \ge e$, realizing action $a = 0$ for at least ${ x \triangleq e' - e + c } $ consecutive slots, starting from $(e,\theta,\varphi)$, leads to state 
${(e' + c, 
\theta +x,
\tilde{X}, \hat{X}
) }$ with a positive probability (w.p.p.);
then taking action $a = 2$ leads to state $(e', 1, \tilde{X}', \hat{X}')$ w.p.p., recall that action $a = 2$ requires $c$ units of energy. 

For the case where $e' < e$, supposing $e\ge c$ without loss of generality,  taking action $a = 2$ for ${ \floor{\frac{e }{c  } } } $ consecutive slots leads 
  to state $(r, 1, \tilde{X}, \hat{X})$, where $r\triangleq e-c\floor{\frac{e }{c }}$, w.p.p.. Then taking  action $a = 0$ for  $c-r$ slots leads to state $(c, 1+c-r, \tilde{X}, \hat{X})$ w.p.p., and, subsequently, taking action $a=2$ 
  leads to state $(0, 1, \tilde{X}, \hat{X})$ w.p.p..
  Now, following the same procedure for the case $e'\ge e$, described above,   leads to state $(e',\theta',\tilde{X}', \hat{X}') $ w.p.p., which completes the proof. 
    \subsection{Proof of Proposition \ref{Prop_AoII_blf_gnr}}\label{App_AoII_blf_gnr}
     Generally, the belief update depends on the dynamic of $X(t)$ and $\hx$.
 The dynamic of $X(t)$ is independent of action and observation.
 Let's denote $\mathcal{B}\triangleq \{b_i(t)\}_{i= 0,1,\ldots}$.
We first consider the case where $a(t)=0$, implying that ${\hat{X}(t+1)=\hx}$. Let's start to obtain $b_0(t+1)$ below:
\begin{equation}
\hspace{-1 em}
\nonumber
    \begin{array}{ll}
        & b_0(t+1)  
        \triangleq \Pr\{\delta(t+1)=0\,\big|\, a(t)=0,\mathcal{B},o(t+1)\} 
        \\& = 
    \Pr\{\delta(t+1)=0\,\big|\, a(t)=0,\mathcal{B}\} 
    \\ & 
   \stackrel{(a)}
    {=} \Pr\{X(t+1)=X(t)\}\Pr\{\hat{X}(t+1)= X(t) \,\big|\, a(t)=0,\mathcal{B}\} 
    \\&
    +
    \Pr\{X(t+1)=1-X(t)\}
    \\&
    \Pr\{\hat{X}(t+1)= 1-X(t)\,\big|\, a(t)=0,\mathcal{B}\}
    \\&
    \stackrel{ (b)} {=}
     pb_0(t) + \brp (1-b_0(t) ),
    \end{array}
\end{equation}
where $(a)$ follows from the independence of the source process and the estimate,
and $(b)$ follows from the facts that $\Pr\{X(t+1)=X(t)\}=p$, 
$\Pr\{\hat{X}(t+1)= X(t) \,\big|\, a(t)=0,\mathcal{B}\}=\Pr\{\hat{X}(t)= X(t) \,\big|\,\mathcal{B}\}=\Pr\{\delta(t)=0\,\big|\,\mathcal{B}\}=b_0(t)$, and 
$\Pr\{\hat{X}(t+1)= 1-X(t)\,\big|\, a(t)=0,\mathcal{B}\}=
\Pr\{\hat{X}(t)= 1-X(t) \,\big|\,\mathcal{B}\}=\Pr\{\delta(t)\neq 0\,\big|\,\mathcal{B}\}=1-b_0(t).
$
For the other beliefs, i.e., $b_i(t+1),\,i=1,\dots$, we should calculate the probability of ${X(t+1) \neq \hat{X}(t+1)}$ given the facts that the AoII can increase only by one and $\hat{X}(t+1)=\hx$. We have 
\begin{equation}\label{Eq_blf_proof_a0}
\hspace{-1.1 em}
    \begin{array}{ll}
     &    b_1(t+1)  
         =\Pr\{\delta(t)=0, X(t+1) \neq \hat{X}(t+1)\,\big|\, a(t)=0,\mathcal{B}\} 
       \\&
         \stackrel{(a)}
         {= }
    \Pr\{\delta(t)=0\,\big|\, \mathcal{B}\} 
    \Pr\{X(t+1) = 1-X(t)\} = 
    b_0(t) \brp,
    \end{array}
\end{equation}
where $(a)$ follows from the conditional probability rule,
and $\Pr\{X(t+1)\neq \hat{X}(t+1)\,\big|\,\delta(t)=0\}= \Pr\{X(t+1) = 1-X(t)\}.$
Similarly, for $i=2,\dots$, we have
\begin{equation}
\hspace{- 2.2 em}
    \begin{array}{ll}
      &   b_i(t+1)  
      \\&
         =\Pr\{\delta(t)=i-1, X(t+1) \neq \hat{X}(t+1)\,\big|\, a(t)=0,\mathcal{B}\} 
       \\&
       \vspace{-1 em}
         \stackrel{(a)}
         {= }
    \Pr\{\delta(t)=i\,\big|\, \mathcal{B}\} 
    \Pr\{X(t+1) = X(t)\} = 
    b_i(t) p,
    \end{array}
\end{equation}
where $(a)$ follows from the same fact used in \eqref{Eq_blf_proof_a0} but
$\Pr\{X(t+1)\neq \hat{X}(t+1)\,\big|\,\delta(t)\neq 0\}= \Pr\{X(t+1) = X(t)\}$, using the fact that $\delta(t)\neq 0$ means $X(t)\neq\hx$.  

Now we consider the case where $a(t) =1$. This implies that $\hat{X}(t+1)$ could be $1-\hat{X}(t)$. 
However,  ${\rho(t+1)=1}$ implies that the transmission was not successful and thus $\hat{X}(t+1)=\hx$. Therefore, the belief update when $a(t)=1$ and $\rho(t+1)=1$ is equivalent to the belief update of $a(t)=0$ given above. Suppose $a(t)=1$ and $\rho(t+1)=0$, which implies ${\hat{X}(t+1) = 1-\hx}$ since $a(t)=1$ means ${\hat{X}(t)\neq \tx }$. 
The belief update when $a(t)=1$ and $\rho(t+1)=0$ follows the similar way of that for $a(t) = 0$ by replacing ${\hat{X}(t+1)= 1- \hat{X}(t) }$. 

Finally, let's consider the case where $a(t)=2$, which means that $X(t)$ is observed at slot $t$ and $\tilde{X}(t+1) = X(t)$. Again, we need to consider whether $\hat{X}(t+1)$ equals $\hx$ or not, which can be determined by $\rho(t+1)$. There are two different cases:  1) $\rho(t+1)=1$, which means ${\hat{X}(t+1)\neq \tilde{X}(t+1)}$, and since $\tilde{X}(t+1)=X(t)$, we have $\hat{X}(t+1) \neq X(t)$; and 
2) $\rho(t+1)=0$, which means $\hat{X}(t+1)={X}(t)$. Accordingly, the belief when $a(t)=2$ and  $\rho(t+1)=1$ is obtained by
\begin{equation}
\nonumber
\hspace{-1 em}
    \begin{array}{ll}
        & b_0(t+1)  = \Pr\{X(t+1)=\hat{X}(t+1)\,\big|\,\hat{X}(t+1)\neq X(t),\,\mathcal{B}\}
      \\&   =\Pr\{X(t+1)\neq {X}(t)\} = \brp, 
      \\&
      \text{and}
    \\ & 
    b_i(t+1)
    \\ &= \Pr\{\delta(t)=i-1, X(t+1) \neq \hat{X}(t+1)\,\big| \hat{X}(t+1)\neq X(t),\mathcal{B}\}
    \\&
    =
    \Pr\{\delta(t)=i-1\,\big|\,\mathcal{B}\}
    \Pr\{{X}(t+1) =  X(t)\} = b_{i-1}(t)p.
    \end{array}
\end{equation}
For  $a(t)=2$ and  $\rho(t+1)=0$, the AoII can be $0$, or $1$, with the following probabilities:
\begin{equation}
\nonumber
\hspace{-1 em}
    \begin{array}{ll}
        & b_0(t+1)  = \Pr\left\{X(t+1)=\hat{X}(t+1)\,\big|\,\hat{X}(t+1)= X(t),\,\mathcal{B}\right\}
       \\ & =
         \Pr\{X(t+1) =  {X}(t)\} = p, 
    \\ & 
    b_1(t+1) = \Pr\left\{X(t+1)\neq \hat{X}(t+1)\,\big|\,\hat{X}(t+1)= X(t),\,\mathcal{B} \right\}
   \\ &  =\Pr\left\{X(t+1) \neq {X}(t)\right\} = \brp, 
    \end{array}
\end{equation}
which completes the proof. 
\subsection{Proof of Proposition \ref{Prop_AoII_blf_prfch}}\label{App_AoII_blf_prfch}
The key idea of the proof is that whenever the sampling and transmission action $a(t)=2$ is being taken, the belief at slot $t+1$ becomes independent of the current belief, as suggested by \eqref{Eq_BlfupAoII_reset}. 
Moreover, when  $a(t)=0$, the belief update follows \eqref{Eq_BlfupdAoII_idle}. 
Thus, to compute the belief at slot $t$, the only required information is how many slots ago the last sampling was performed, or, in other words, how many consecutive slots the belief update follows \eqref{Eq_BlfupdAoII_idle}. This information is the AoI at the transmitter $\theta(t)$. To easily describe the expression of belief next, the AoII evolution, as a controlled Markov chain, is shown in Fig. \ref{Fig_AoII_evl}, where  
$v(t)$ is given by 
 \begin{equation}\label{Eq_vt}
     \begin{array}{cc}
 v(t) \triangleq
\left\{\begin{array}{ll}
 \brp, & \text{if}~~   a(t) = 0,
   \\
p , & \text{if}~~   a(t) = 2.
 \end{array}\right. 
    \end{array}
 \end{equation}  
\begin{figure}[t!]
    \centering
    \hspace{-1 em}
    \includegraphics[width=.5\textwidth]{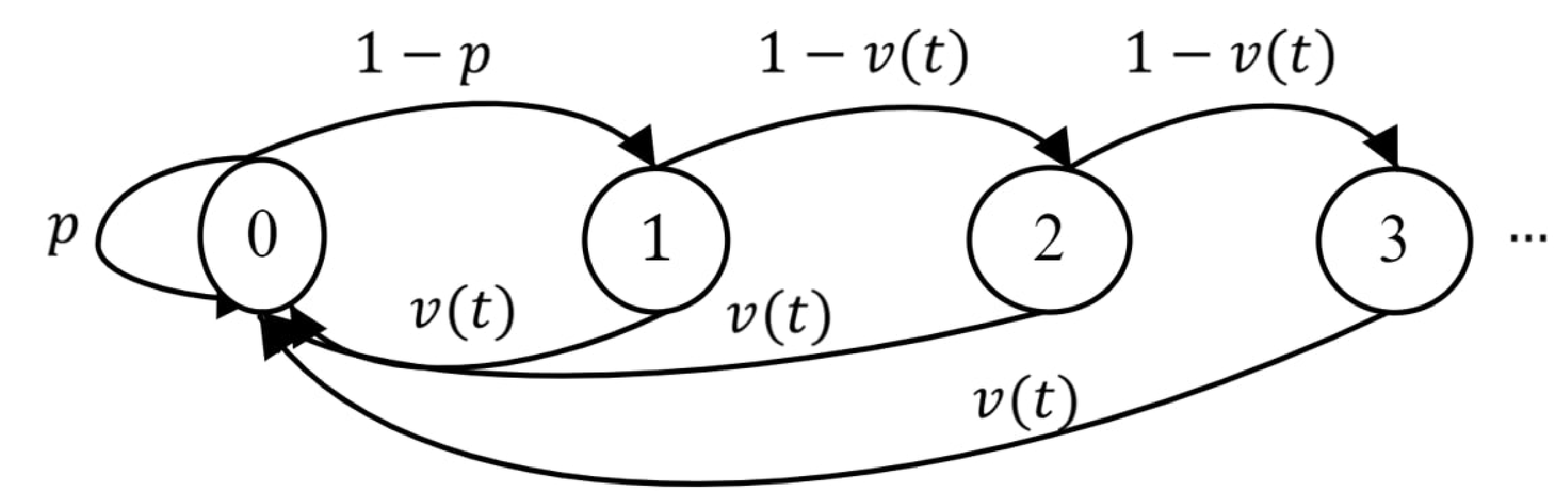}    
    \caption{\blue{The evolution of the AoII under the perfect channel, where $v(t)$ is given by \eqref{Eq_vt}.}
    }
    \label{Fig_AoII_evl}
\end{figure}
Given that $\theta(t)$ is given, we calculate the belief below starting with $b_0(t)$:
\begin{equation}
\hspace{-1 em}
 \begin{array}{ll}
 \nonumber
   b_0(t) & = \Pr\left\{ X(t) = \hat{X}(t)\,\big|\,\theta(t) \right\} 
  \\&  \stackrel{(a)}
    {=}
   \Pr\left\{ X(t)  =X(t-\theta(t)) \right\}  
   \stackrel{(b)}
    {=}
   0.5(1+(2p-1)^{\theta(t)}),
 \end{array}
\end{equation}
where $(a)$ follows from the fact that the estimate is the last received sample, and $(b)$ follows from the {$\theta(t)$-step} transition probability of the source's Markov chain.
For ${b_i(t),i=1,\dots}$, first, the AoII cannot be larger than $\theta(t)$; thus, ${b_i(t)=0,\,\forall \, i > \theta(t)}$. 
Moreover, to have ${ \delta(t) = i,\,i=1,\dots,\theta(t) }$: i) the AoII at time $t-i$ must be zero and ii) the AoII must be non-zero between slots $t-i$ and  $t$ (i.e., for $i$ consecutive slots). 
The probability of occurrence of event (i)  is $\Pr\{\delta(t')=0\,\big|\,\theta(t')=\theta(t)-i\} = 0.5(1+(2p-1)^{\left(\theta(t)-i\right)})$, and, 
it follows from Fig. \ref{Fig_AoII_evl}  that the probability of occurrence of event (ii) is $ \brp (1-v(t))^{(t-i-t-1)}$, where $v(t)= \brp$ since $a(t)=0$.
Multiplying these two completes the proof.
\bibliographystyle{ieeetr}
\bibliography{Bib_References/conf_short,
Bib_References/IEEEabrv,
Bib_References/Bibliography}

\begin{thebibliography}{10}

\bibitem{zakeri2023_Asilomar}
A.~Zakeri, M.~Moltafet, and M.~Codreanu, ``Semantic-aware real-time tracking of a {Markov} source under sampling and transmission costs,'' in {\em Proc. Annual Asilomar Conf. Signals, Syst., Comp.}, pp.~694--698, Pacific Grove, CA, USA, Oct. 2023.

\bibitem{zakeri2023_AoII}
A.~Zakeri, M.~Moltafet, and M.~Codreanu, ``Optimal semantic-aware sampling and transmission in energy harvesting systems through the {AoII},'' in {\em Proc. IEEE Globcom Workshop}, pp.~1597--1602, Kuala Lumpur, Malaysia, Dec. 2023.

\bibitem{ELIF_Semantic_mag}
{E. Uysal \textit{et al}.}, ``Semantic communications in networked systems: A data significance perspective,'' {\em {IEEE} Netw.}, vol.~36, no.~4, pp.~233--240, Jul. 2022.

\bibitem{Deniz_Semantic_JSAC}
{D. Gündüz \textit{et al}.}, ``Beyond transmitting bits: Context, semantics, and task-oriented communications,'' {\em {IEEE} J. Sel. Areas Commun.}, vol.~41, no.~1, pp.~5--41, Jan. 2023.

\bibitem{Roy_2012}
S.~{Kaul}, R.~{Yates}, and M.~{Gruteser}, ``Real-time status: How often should one update?,'' in {\em Proc. IEEE Int. Conf. on Computer Commun.}, pp.~2731--2735, Orlando, FL, USA, Mar. 2012.

\bibitem{AoI_Monograph_Modiano}
Y.~Sun, I.~Kadota, R.~Talak, and E.~Modiano, ``Age of information: {A} new metric for information freshness,'' {\em Synthesis Lectures on Communication Networks}, vol.~12, no.~2, pp.~1--224, Dec. 2019.

\bibitem{Tony_1}
A.~{Maatouk}, S.~{Kriouile}, M.~{Assaad}, and A.~{Ephremides}, ``The age of incorrect information: A new performance metric for status updates,'' {\em {IEEE/ACM} Trans. Netw.}, vol.~28, no.~5, pp.~2215--2228, Oct. 2020.

\bibitem{AoII_Semantic}
A.~Maatouk, M.~Assaad, and A.~Ephremides, ``The age of incorrect information: An enabler of semantics-empowered communication,'' {\em {IEEE} Trans. Wireless Commun.}, vol.~22, no.~4, pp.~2621--2635, Apr. 2023.

\bibitem{Nikolaos_goal_or}
N.~Pappas and M.~Kountouris, ``Goal-oriented communication for real-time tracking in autonomous systems,'' in {\em Proc. IEEE Inter. Conf. on Auto. Syst. (ICAS)}, pp.~1--5, Montreal, QC, Canada, Aug. 2021.

\bibitem{NiKo_Goal2}
E.~Fountoulakis, N.~Pappas, and M.~Kountouris, ``Goal-oriented policies for cost of actuation error minimization in wireless autonomous systems,'' {\em {IEEE} Commun. Lett.}, vol.~27, no.~9, pp.~2323--2327, Sep. 2023.

\bibitem{Andrea_RemoteMonitoring}
G.~Cocco, A.~Munari, and G.~Liva, ``Remote monitoring of two-state {Markov} sources via random access channels: An information freshness vs. state estimation entropy perspective,'' {\em {IEEE} J. Sel. Areas Inf. Theory}, vol.~4, pp.~651--666, Nov. 2023.

\bibitem{Atilla_RtM}
J.~Yun, C.~Joo, and A.~Eryilmaz, ``Optimal real-time monitoring of an information source under communication costs,'' in {\em Proc. IEEE Conf. on Decis. and Contr. (CDC)}, pp.~4767--4772, Miami, FL, USA, Dec. 2018.

\bibitem{Kam_Towards_eff_2018}
C.~Kam, S.~Kompella, G.~D. Nguyen, J.~E. Wieselthier, and A.~Ephremides, ``Towards an effective age of information: Remote estimation of a {Markov} source,'' in {\em Proc. IEEE INFOCOM Workshop}, pp.~367--372, Honolulu, HI, USA, Apr. 2018.

\bibitem{Petar_AoII_ICC}
A.~Nayak, A.~E. Kalør, F.~Chiariotti, and P.~Popovski, ``A decentralized policy for minimization of age of incorrect information in slotted {ALOHA} systems,'' in {\em Proc. IEEE Int. Conf. Commun.}, pp.~1688--1693, Rome, Italy, Jun. 2023.

\bibitem{Kam_2018}
C.~Kam, S.~Kompella, and A.~Ephremides, ``Age of incorrect information for remote estimation of a binary {Markov} source,'' in {\em Proc. IEEE INFOCOM Workshop}, pp.~1--6, Toronto, ON, Canada, Jul. 2020.

\bibitem{Chen_AoII_2023}
Y.~Chen and A.~Ephremides, ``Minimizing age of incorrect information over a channel with random delay,'' {\em arXiv:2301.06150}, Feb. 2023.

\bibitem{Elif_Ergen_Learning_2021}
E.~T. Ceran, D.~G{\"u}nd{\"u}z, and A.~Gy{\"o}rgy, ``Learning to minimize age of information over an unreliable channel with energy harvesting,'' {\em ArXiv}, vol.~abs/2106.16037, Jun. 2021.

\bibitem{Elif_when2pull_conf}
O.~T. Yavascan, E.~T. Ceran, Z.~Cakir, E.~Uysal, and O.~Kaya, ``When to pull data for minimum age penalty,'' in {\em Proc. Int. Symp. Model. and Opt. Mobile, Ad Hoc, Wireless Netw.}, pp.~1--8, Philadelphia, PA, USA, Oct. 2021.

\bibitem{Zakeri_CL}
A.~Zakeri, M.~Moltafet, M.~Leinonen, and M.~Codreanu, ``Query-age-optimal scheduling under sampling and transmission constraints,'' {\em {IEEE} Commun. Lett.}, vol.~27, no.~4, pp.~1205--1209, Apr. 2023.

\bibitem{Elif_2019_EH}
B.~T. Bacinoglu, Y.~Sun, E.~Uysal, and V.~Mutlu, ``Optimal status updating with a finite-battery energy harvesting source,'' {\em Journal of Commun. and Netw.}, vol.~21, no.~3, p.~280 – 294, Jun. 2019.

\bibitem{Yin_Sun_EstAcm}
T.~Z. Ornee and Y.~Sun, ``Sampling and remote estimation for the {Ornstein-Uhlenbeck} process through queues: {A}ge of information and beyond,'' {\em {IEEE/ACM} Trans. Netw.}, vol.~29, no.~5, pp.~1962--1975, Oct. 2021.

\bibitem{Yin_Sun_IT}
Y.~Sun, Y.~Polyanskiy, and E.~Uysal, ``Sampling of the {Wiener} process for remote estimation over a channel with random delay,'' {\em {IEEE} Trans. Inf. Theory}, vol.~66, no.~2, pp.~1118--1135, Feb. 2020.

\bibitem{Assaad_AoII_Not_obs}
S.~Kriouile and M.~Assaad, ``Minimizing the age of incorrect information for real-time tracking of {Markov} remote sources,'' in {\em Proc. IEEE Inter. Symp. on Inf. Theory (ISIT)}, pp.~2978--2983, Melbourne, Australia, Jul. 2021.

\bibitem{Niko_DistortionSampCost}
J.~S, N.~Pappas, and R.~V. Bhat, ``Distortion minimization with age of information and cost constraints,'' in {\em Proc. Int. Symp. Model. and Opt. Mobile, Ad Hoc, Wireless Netw.}, pp.~1--8, Singapore, Singapore, Aug. 2023.

\bibitem{Niko_Remote_Recons}
M.~Salimnejad, M.~Kountouris, and N.~Pappas, ``Real-time reconstruction of {Markov} sources and remote actuation over wireless channels,'' {\em {IEEE} Trans. Commun.}, pp.~1--1, Early Access, 2024.

\bibitem{Assaad_unknownSource}
S.~Kriouile and M.~Assaad, ``Minimizing the age of incorrect information for unknown {Markov}ian source,'' {\em arXiv:2210.09681}, Sep. 2023.

\bibitem{Assaad_distance_basedAoII}
S.~Kriouile and M.~Assaad, ``When to pull data from sensors for minimum distance-based age of incorrect information metric,'' {\em arXiv:2202.02878}, Feb. 2022.

\bibitem{Peter_GoalOriented}
P.~M. de~Sant~Ana, N.~Marchenko, B.~Soret, and P.~Popovski, ``Goal-oriented wireless communication for a remotely controlled autonomous guided vehicle,'' {\em {IEEE} Wireless Commun. Lett.}, vol.~12, no.~4, pp.~605--609, Apr. 2023.

\bibitem{Saha_AoII}
S.~Saha, H.~Singh~Makkar, V.~Bala~Sukumaran, and C.~R. Murthy, ``On the relationship between mean absolute error and age of incorrect information in the estimation of a piecewise linear signal over noisy channels,'' {\em {IEEE} Commun. Lett.}, vol.~26, no.~11, pp.~2576--2580, Nov. 2022.

\bibitem{Marios_ratedist}
P.~A. Stavrou and M.~Kountouris, ``The role of fidelity in goal-oriented semantic communication: A rate distortion approach,'' {\em {IEEE} Trans. Commun.}, vol.~71, no.~7, pp.~3918--3931, Jul. 2023.

\bibitem{Scheduling_estimation_optimal_2013}
A.~Nayyar, T.~Başar, D.~Teneketzis, and V.~V. Veeravalli, ``Optimal strategies for communication and remote estimation with an energy harvesting sensor,'' {\em {IEEE} Trans. Autom. Control}, vol.~58, no.~9, pp.~2246--2260, Mar. 2013.

\bibitem{YinSun_WhittleIndx_mutisource}
T.~Z. Ornee and Y.~Sun, ``A {Whittle} index policy for the remote estimation of multiple continuous {Gauss-Markov} processes over parallel channels,'' in {\em Proceedings of the Int. Symp. on Theory, Algorithmic Foundations, and Protocol Design for Mobile Netw. and Mobile Computing}, MobiHoc '23, p.~91–100, New York, NY, USA, Oct. 2023.

\bibitem{Petar_goalOr_schedulingtc}
J.~Holm, F.~Chiariotti, A.~E. Kalør, B.~Soret, T.~B. Pedersen, and P.~Popovski, ``Goal-oriented scheduling in sensor networks with application timing awareness,'' {\em {IEEE} Trans. Commun.}, vol.~71, no.~8, pp.~4513--4527, Aug. 2023.

\bibitem{AoII_ARQ}
K.~Bountrogiannis, A.~Ephremides, P.~Tsakalides, and G.~Tzagkarakis, ``Age of incorrect information with {Hybrid} {ARQ} under a resource constraint for {N}-ary symmetric {Markov} sources,'' {\em arXiv:2303.18128}, Mar. 2023.

\bibitem{Zakeri_wcnc24}
A.~Zakeri, M.~Moltafet, and M.~Codreanu, ``Goal-oriented remote tracking of an unobservable multi-state {Markov} source,'' in {\em Proc. IEEE Wireless Commun. and Networking Conf.}, Dubai, UAE, Apr. 2024.

\bibitem{vilni2024real}
S.~S. Vilni, A.~Zakeri, M.~Moltafet, and M.~Codreanu, ``Real-time tracking in a status update system with an imperfect feedback channel,'' {\em arXiv preprint arXiv:2407.06749}, Jul. 2024.

\bibitem{EH_bern_conf}
P.~Rafiee and O.~Ozel, ``Active status update packet drop control in an energy harvesting node,'' in {\em Proc. IEEE Works. on Sign. Proc. Adv. in Wirel. Comms.}, pp.~1--5, Atlanta, GA, USA, May 2020.

\bibitem{EH_Berno_Erikl}
Z.~Chen, N.~Pappas, E.~Björnson, and E.~G. Larsson, ``Age of information in a multiple access channel with heterogeneous traffic and an energy harvesting node,'' in {\em Proc. IEEE INFOCOM Workshop}, pp.~662--667, Paris, France, May 2019.

\bibitem{POMDP_AI}
O.~Sigaud and O.~Buffet, {\em {Markov} decision processes in artificial intelligence}.
\newblock John Wiley \& Sons, 2013.

\bibitem{bertsekas2007dynamic}
D.~Bertsekas, {\em Dynamic Programming and Optimal Control, Vol. II}.
\newblock 3rd ed. Athena Scientific, 2007.

\bibitem{Deep_Learning_Nature}
V.~Mnih {\em et~al.}, ``Human-level control through deep reinforcement learning,'' {\em Nature}, vol.~518, no.~7540, pp.~529--533, Feb., 2015.

\bibitem{dl_on_EH_access}
M.~Lv and E.~Xu, ``Deep learning on energy harvesting {IoT} devices: Survey and future challenges,'' {\em IEEE Access}, vol.~10, pp.~124999--125014, Nov. 2022.

\bibitem{Puterman_Book}
M.~L. Puterman, {\em {Markov} Decision Processes: Discrete Stochastic Dynamic Programming}.
\newblock The MIT Press, 1994.

\end{thebibliography}

\end{document}